\documentclass[USenglish, a4paper, numberwithinsect, cleveref]{lipics-v2021}
\usepackage{xspace}

\newcommand\G{\ensuremath{\mathcal{G}}\xspace}
\newcommand\TC{\textsf{TC}\xspace}

\usepackage{algorithm}
\usepackage{cleveref}
\usepackage{tikz}
\usepackage[noend]{algorithmic}

\usetikzlibrary{snakes,arrows.meta,calc,shapes.symbols,shapes.arrows,decorations.pathreplacing,calc,tikzmark,fit}
\nolinenumbers

\usepackage{xcolor}

\xdefinecolor{green}{rgb}{0,0.7,0}
\tikzset{defnode/.style={draw, circle, gray, inner sep=0, minimum size=2.3}}

\usepackage{environ}
\NewEnviron{killcontents}{}

\title{Structural Lemmas on Temporal Connectivity}
\subtitle{Revisiting gossip literature and beyond}

\author{Daniele Carnevale}{Department of Computer Science, University of Geneva, Switzerland}{daniele.carnevale@unige.ch}{https://orcid.org/0009-0006-3997-869X}{Supported by the Swiss NSF project RECAPT (200021-236640).}
\author{Arnaud Casteigts}{Department of Computer Science, University of Geneva, Switzerland}{arnaud.casteigts@unige.ch}{https://orcid.org/0000-0002-7819-7013}{Supported by the Swiss NSF project RECAPT (200021-236640).}
\author{David Schindl}{Haute École de Gestion, Geneva, Switzerland}{david.schindl@hesge.ch}{https://orcid.org/0000-0002-7009-5530}{}

\authorrunning{Daniele Carnevale, Arnaud Casteigts, David Schindl}
\Copyright{Daniele Carnevale, Arnaud Casteigts, David Schindl}

\ccsdesc[500]{Mathematics of computing~Discrete mathematics}
\ccsdesc[500]{Theory of computation~Graph algorithms analysis}

\keywords{Temporal connectivity, temporal cycles, disjoint paths, minimal TC graphs, happy temporal graphs.}

\begin{document}
\maketitle
\begin{abstract}
  This paper presents several lemmas on the structure of temporal connectivity in temporal graphs. Some of these lemmas are adapted from the literature on gossip from the 70-90's and reformulated in the context of temporal graph theory. Many others are original. For ease of presentation and for avoiding lengthy case distinctions, we formulate all the lemmas in the setting of simple and proper (a.k.a, happy) temporal graphs, discussing occasionally their generality beyond this setting.
\end{abstract}

\section{Introduction}

A temporal graph is a labeled graph $\G=(V,E,\lambda)$ where $G=(V,E)$
is a standard finite graph called the \emph{footprint}. In general,
the labeling function $\lambda : E \to 2^{\mathbb{N}}$ assigns one or
several \emph{time labels} to each edge of $E$, interpreted as
presence times (we will further restrict $\lambda$). The largest time label $L$ is the \emph{lifetime} of
$\G$. Reachability in temporal graphs is defined in terms of walks that
traverse the edges chronologically. Namely, a \emph{temporal walk} is
a sequence $\langle(e_i, t_i)\rangle$ such that $\langle e_i \rangle$
is a walk in $G$, $\langle t_i \rangle$ is nondecreasing, and
$t_i \in \lambda(e_i)$ for all $i$ in the sequence. If $\langle t_i \rangle$ is increasing, then the temporal walk is \emph{strict}. Temporal paths are defined analogously, forbidding the repetition of intermediate vertices.

The distinction between strict and non-strict walks gives rise to distinct versions of many problems and properties. A common restriction is to assume that the labeling $\lambda$ is \emph{proper}; namely, two edges incident to the same vertex cannot share a label. In proper temporal graphs, strict and non-strict temporal walks coincide.
Another common restriction is to consider that the edges have a single time label, referred to as \emph{simple} temporal graphs. Although restricted, temporal graphs that are both proper and simple (called \emph{happy}) already capture many interesting phenomena in temporal graphs. Furthermore, negative results obtained in happy graphs apply in general, making them a natural target for hardness reduction. For positive results, considering happy graphs comes with an inevitable loss of generality. However, positive result on temporal graph should hold at least in this very specific case. Happy graphs are thus a relevant prototype for studying the general theory.

In this work, we identify a set of structural lemmas on happy temporal graph \G, with a focus on temporal connectivity, where all the vertices can reach each other through temporal walks (class \TC). In particular, we focus on three key properties which are \emph{minimality} (none of the edges can be removed without impacting reachability), \emph{acyclicity} (absence of temporal cycles), and \emph{two-path}-freeness (at most one temporal walk between every ordered pair of vertices). It turns out that these three properties have interesting connections with each other. They also impose strong restrictions on the \emph{structure} of the temporal graph, which is a rare feature in temporal graphs.

Another reason why these properties are relevant is that they were already studied in the literature of gossip in the 70-90's (in particular,~\cite{tijdeman71,hajnal,baker72,bumby,west_acyclic1,west_disjoint,seress,labahn93,labahn95,krumme}), many results of which have gone unnoticed so far by the temporal graph community. Some of our lemmas are reformulation of existing results in this literature. Sometimes, the properties do not appear explicitly in these papers, we extract them from arguments found in the proof of other results. On the way, we prove a number of original lemmas and we introduce some tools of independent interest, such as \emph{compressed temporal graphs} and \emph{contiguous caterpillars}.

Regarding the lemmas themselves, we opt for breaking them into a myriad of small lemmas rather than final theorems whose proof would combine all arguments. We do so to make them as modular as possible, and also to pave the way towards a potential subsequent formalization using a proof assistant like Coq, Lean, or Isabelle. This also results in lemmas whose individual proofs are often easier to follow.

\subsection{Related works}

This preliminary version of the paper was completed hastly before a Dagstuhl seminar, so that one of the authors can present the results. The content of this section will be added in a subsequent version of the paper. In particular, we will review here a number of results that have been obtained in the temporal graph community, focusing on \emph{structural} properties. We will also give a summary and time line of the main results in gossip theory. 

\section{Preamble}

From this point on, unless otherwise mentioned, all the temporal graphs we consider are happy and we do not recall this fact. As most of the properties we consider are temporal, we often omit the adjective ``temporal''. For example, a temporal graph is just a graph, and its footprint is a \emph{static} graph. Furthermore, as happy graphs are simple, we abuse the terminology by referring to both an edge and its label just as an edge.

\subsection{Further definitions}

Let $\G=(V,E,\lambda)$. For a vertex $v$, we denote by $e^-(v)$ the minimum edge of $v$ and by $n^-(v)$ the corresponding neighbor (using the adjectives minimum or earliest interchangeably). We define $e^+(v)$ and $n^+(v)$ analogously.
We write $E^- = \{e^-(v) \mid v \in V\}$ and  $E^+ = \{e^+(v) \mid v \in V\}$ for the set of edges that are minimum (maximum) for at least one vertex. 

If there is a temporal walk from vertex $u$ to vertex $v$, we say that $u$ can reach $v$, and write $u \leadsto v$. The reachability graph $\mathcal{R}(\G)$ is a static directed graph $(V,A)$ whose set of arcs $A=\{(u,v) : u\leadsto v\}$ represent the existence of temporal walks in $\G$. A temporal graph $\G$ is \emph{minimal} if for all edge $e$ in $E$, $\mathcal{R}(\G \setminus \{e\}) \ne \mathcal{R}(\G)$, i.e. none of the edges can be removed without affecting reachability.

A \emph{temporal cycle} is a (nonempty) temporal walk from a vertex to itself. A small subtlety should be mentioned here. As already said, strict and non-strict walks coincide in happy graphs (and more generally proper graphs). The only exception is that a non-strict walk may travel back and forth along the same edge consecutively, creating an artificial cycle of length $2$. To avoid this degenerate case, we require temporal walks to be simple (i.e., do not repeat edges) throughout the paper.
Then a graph $\G$ is \emph{acyclic} if it has no temporal cycles.
A graph is two-path-free if for all $u,v$, there is at most one temporal walk from~$u$ to~$v$. Whether~$u$ and~$v$ are required to be distinct in this definition does not affect the lemmas, thus we leave it unspecified. 
We will often consider minimal, acyclic, or two-path-free graphs that are also temporally connected (in \TC), discussing the extra properties that these graphs must satisfy.

The reachability notation $\leadsto$ is extended in natural ways to work between vertices and edges, between edges, and between sets of both. In particular, $v\leadsto e$ means that there exists a temporal walk that starts at $v$ and traverses the edge $e$ at some point. Similarly, $e\leadsto v$ means that there is a walk that traverses $e$ at some point and ends at $v$. We write $e \leadsto e'$ if there exists a temporal path that traverses $e$ and $e'$ in this order (not necessarily consecutively). If $e\not\leadsto e'$ and $e'\not\leadsto e$, then $e$ and $e'$ are called \emph{incomparable}.

Let $V'\subseteq V$ and $E'\subseteq E$, then $e \leadsto V'$ means $e\leadsto v$ for all $v$ in $V'$, and $v\leadsto E'$ means that $v\leadsto e$ for all $e$ in $E'$. 
If $v\leadsto V$ (resp. $e\leadsto V$), then $v$ is a \emph{source} vertex (resp. $e$ is a \emph{source} edge).
If $V\leadsto v$ (resp. $V \leadsto e$), then $v$ is a \emph{sink} vertex (resp. $e$ is a \emph{sink} edge).
A \emph{pivot edge} is an edge that is both a source edge and a sink edge. A graph that has a pivot edge is called \emph{pivotable}.

\subsection{Time compression}

If one does not care about the exact timing, then the reachability of a temporal graph depends only on the local \emph{ordering} of the edges at the vertices. For example, the graphs $\G_1$ and $\G_2$ in Figure~\ref{fig:equivalent} have the same reachability in a very strong sense: there is a bijection between the temporal walks in $\G_1$ and the ones in $\G_2$. 
\begin{figure}[ht]
\centering
  \begin{tikzpicture}[scale=1.4]
    \tikzstyle{every node}=[defnode]
    \path (0,0) node (a) {};
    \path (1,0) node (b) {};
    \path (0,1) node (c) {};
    \path (1,1) node (d) {};
    \path (.5,2) node (e) {};
    \tikzstyle{every node}=[font=\scriptsize]
    \path (a) node[left] {$a$};
    \path (b) node[right] {$b$};
    \path (c) node[left] {$c$};
    \path (d) node[right] {$d$};
    \path (e) node[above] {$e$};
    \tikzstyle{every node}=[inner sep=2pt]
    \draw (a) -- node[below] {10}(b);
    \draw (a) -- node[left] {8}(c);
    \draw (b) -- node[right] {3}(d);
    \draw (c) -- node[above=-1pt] {2}(d);
    \draw (c) -- node[above left] {3}(e);
    \draw (d) -- node[above right] {11}(e);
    \path (.5,-.6) node (g1) {$\G_1$};
    \path (2,.8) node(simeq){\huge $\simeq$};
  \end{tikzpicture}
  ~~~
  \begin{tikzpicture}[scale=1.4]
    \tikzstyle{every node}=[defnode]
    \path (0,0) node (a) {};
    \path (1,0) node (b) {};
    \path (0,1) node (c) {};
    \path (1,1) node (d) {};
    \path (.5,2) node (e) {};
    \tikzstyle{every node}=[font=\scriptsize]
    \path (a) node[left] {$a$};
    \path (b) node[right] {$b$};
    \path (c) node[left] {$c$};
    \path (d) node[right] {$d$};
    \path (e) node[above] {$e$};
    \tikzstyle{every node}=[inner sep=2pt]
    \draw (a) -- node[below] {4}(b);
    \draw (a) -- node[left] {3}(c);
    \draw (b) -- node[right] {2}(d);
    \draw (c) -- node[above=-1pt] {1}(d);
    \draw (c) -- node[above left] {2}(e);
    \draw (d) -- node[above right] {3}(e);
    \path (.5,-.6) node (g1) {$\G_{2}$};
  \end{tikzpicture}
  \caption{Two equivalent graphs, one of which is compressed (right).}
  \label{fig:equivalent}
\end{figure}
Among all the graphs equivalent to these two, $\G_2$ is in a sense canonical, as its labels are as small as they can be. This is captured more formally by the following definition from~\cite{CCC25}.

\begin{definition}[Contiguity property]
  \label{def:contiguity}
  A graph is (time-)compressed if for every edge~$e$ with $\lambda(e) > 1$, there exists an edge $e'$ adjacent to $e$ such that $\lambda(e')=\lambda(e)-1$.
\end{definition}

As far as (time agnostic) reachability is concerned, compressed graphs capture all the possible cases. We can thus restrict our attention to these graphs without loss of generality, while enjoying the contiguity property that simplify some proofs. In case one needs to compress a graph explicitly, we provide a polynomial-time algorithm that transforms a graph into its (unique) compressed version in~\Cref{sec:compression}. This algorithm is not required in the present paper. 
The following definitions apply in the context of compressed graphs (they would not make much sense otherwise).

\begin{definition}[Unique label]
  \label{def:unique}
  The label of an edge $e$ is unique if for all $e'\ne e$, $\lambda(e')\ne \lambda(e)$. 
\end{definition}

\begin{definition}[Full-range graphs]
  \label{def:full-range}
  A graph is full-range if all the labels are unique. 
\end{definition}

\subsection{Contiguous caterpillars}

If the graph is compressed, one can unfold the contiguity property recursively from any edge, obtaining a useful structure we refer to as a contiguous caterpillar. More precisely, let $uv\in E$ be an edge of $\G$ and initialize $\G'$ as a temporal graph containing $u$, $v$, and the edge $uv$ (with same label). If $\lambda(uv)=1$, we are done. Otherwise, as $\G$ is compressed, there must exist another edge incident to either $u$ or $v$, say $uw$, such that $\lambda(uw)=\lambda(uv)-1$. Add $uw$ and $w$ to $\G'$, then repeat the procedure with respect to $uw$. If at any point there are two candidate edges, choose one arbitrarily. If at any point the new node $w$ is already in $\G'$, create a new copy of $w$ instead and add it to $\G'$. In the end of the process, we obtain a temporal graph $\G'$ whose footprint is a caterpillar.

  \begin{figure}[h]
  \begin{tikzpicture}[scale=.5]
    \tikzstyle{every node}=[circle,inner sep=1pt,minimum size=9.5pt,draw,fill=blue!20,font=\footnotesize]
    \node (0) at (1.87, -.70) {$a$};
    \node (1) at (1.87, 1.3) {$b$};
    \node (2) at (-2.70, 1.30) {$c$};
    \node (3) at (-0.67, 1.30) {$d$};
    \node (4) at (-0.67, -.70) {$e$};
    \node (5) at (-2.70, -2.92) {$f$};
    \node (6) at (1.87, -2.92) {$g$};
    \tikzstyle{every node}=[fill=white,inner sep=1pt,pos=.5,font=\footnotesize]
    \draw (0) -- node {1} (1);
    \draw (0) -- node {2} (4);
    \draw (0) -- node {6} (5);
    \draw (0) -- node {7} (3);
    \draw (0) -- node {8} (6);
    \draw (1) -- node {9} (3);
    \draw (2) -- node {1} (3);
    \draw (2) -- node {5} (5);
    \draw (3) -- node {8} (4);
    \draw (4) -- node {3} (5);
    \draw (5) -- node {4} (6);
    \end{tikzpicture}
    \hfill
  \begin{tikzpicture}[scale=.5]
    \tikzstyle{every node}=[circle,inner sep=1pt,minimum size=9.5pt,draw,fill=blue!20,font=\footnotesize]
    \node (0) at (1.87, -.70) {$a$};
    \node (1) at (1.87, 1.3) {$b$};
    \node (2) at (-2.70, 1.30) {$c$};
    \node (3) at (-0.67, 1.30) {$d$};
    \node (4) at (-0.67, -.70) {$e$};
    \node (5) at (-2.70, -2.92) {$f$};
    \node (6) at (1.87, -2.92) {$g$};
    \tikzstyle{every node}=[fill=white,inner sep=1pt,pos=.5,font=\footnotesize]

      \tikzstyle{every path}=[very thick,black!70]
      \draw (0) -- node {1} (1);
      \draw (0) -- node {2} (4);
      \draw (0) -- node {6} (5);
      \draw[ultra thick,black] (0) -- node {7} (3);
      \draw (2) -- node {5} (5);
      \draw (4) -- node {3} (5);
      \draw (5) -- node {4} (6);
  \end{tikzpicture}
  \hfill
  \begin{tikzpicture}
    \tikzstyle{every node}=[circle,inner sep=1pt, minimum size=9.5pt,draw, fill=blue!20,font=\footnotesize]
    \node (b) at (0,0) {$b$};
    \node (a) at (1,0) {$a$};
    \node (e) at (2,0) {$e$};
    \node (f) at (3,0) {$f$};
    \node (g) at (2.7,1) {$g$};
    \node (c) at (3.3,1) {$c$};
    \node (a2) at (4,0) {$a$};
    \node (d) at (5,0) {$d$};

    \tikzstyle{every node}=[fill=white,inner sep=1pt,font=\footnotesize]
    \path (-.2,1.6) node[right] {\normalsize E.g. from $(ad,7)$:};
    \draw (a) -- node {1} (b);
    \draw (a) -- node {2} (e);
    \draw (a2) -- node {6} (f);
    \draw (a2) -- node {7} (d);
    \draw (c) -- node {5} (f);
    \draw (e) -- node {3} (f);
    \draw (f) -- node {4} (g);
  \end{tikzpicture}
  \caption{\label{fig:caterpillar} Example of a contiguous caterpillar.}
\end{figure}

The contiguous caterpillar of an edge is not necessarily unique (though in this example, it is). In general, an edge with label $t$ could have two adjacent edges with label $t-1$ (one at each endpoint). Regardless, all caterpillars enjoy the following properties:

\begin{lemma}[Basic properties of contiguous caterpillars]\label{lem:caterpillar}~\\
  (i) For an edge $e$, a caterpillar from $e$ contains exactly one edge for each label in $[1,\lambda(e)]$.\\
  (ii) For all edges $e_1,e_2$ in the caterpillar such that $\lambda(e_1) < \lambda(e_2)$, $e_1 \leadsto e_2$.\\  
  (iii) The lifetime $L$ is the number of edges in the caterpillar of the largest edge $e=\arg\max(\lambda)$.
\end{lemma}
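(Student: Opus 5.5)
The plan is to argue directly from the construction of the caterpillar. The construction produces a sequence of edges $e = f_k, f_{k-1}, \dots, f_1$ with $k=\lambda(e)$, $\lambda(f_i)=i$, and $f_{i-1}$ adjacent to $f_i$ in $\G$ (sharing an endpoint) for every $i\ge 2$. The first step is to record this as an invariant of the procedure, by induction on the number of iterations. The procedure starts from $e$ with label $\lambda(e)$. At each step it adds exactly one edge, whose label is one less than the current one. It stops exactly when the label $1$ is reached. Termination and well-definedness are guaranteed because $\G$ is compressed (Definition~\ref{def:contiguity}): every edge of label $>1$ has an adjacent edge of label exactly one less. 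Creating copies of repeated vertices does not change the multiset of labels added. This gives (i): exactly one edge per label in $[1,\lambda(e)]$, and no other edges.

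For (ii), take $e_1=f_i$ and $e_2=f_j$ with $i<j$. Consider the subsequence $f_i, f_{i+1},\dots,f_j$. Consecutive edges share an endpoint in $\G$, and their labels are strictly increasing. I would build a temporal walk by traversing each $f_\ell$ in the direction that ends at the vertex it shares with $f_{\ell+1}$. Concretely, for $\ell<j$ let $x_\ell$ be the common endpoint of $f_\ell$ and $f_{\ell+1}$. We traverse $f_i$ into $x_i$, and then each $f_{\ell}$ ($i<\ell<j$) from $x_{\ell-1}$ to $x_\ell$.

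The one subtlety, and the main obstacle, is the case $x_{\ell-1}=x_\ell$, when three consecutive edges share the same endpoint. Then $f_\ell$ cannot be traversed from $x_{\ell-1}$ to $x_\ell$. In that case I would simply skip $f_\ell$: the walk goes from $f_{\ell-1}$ directly to the next edge at the same vertex, and the labels stay increasing. This still yields a walk that uses $f_i$ and $f_j$ in order. If the intended reading of $e_1\leadsto e_2$ requires only that some walk traverse both edges, then skipping is harmless. Edges are never repeated because their labels are distinct, so the walk is simple in the paper's sense and strictly increasing. If a temporal path rather than a walk is required, one shortcuts repeated vertices; this removes intermediate edges but keeps the first and last, which preserves $e_1\leadsto e_2$.

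Finally, (iii) follows from (i). The lifetime is $L=\max\lambda=\lambda(e)$ for $e=\arg\max(\lambda)$. By (i), the caterpillar of $e$ contains exactly one edge for each label in $[1,L]$, so it has exactly $L$ edges.
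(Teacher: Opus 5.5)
The paper states this lemma without proof, treating it as immediate from the construction. Your argument is a correct write-up of exactly that reasoning. The invariant that $\lambda(f_{\ell-1})=\lambda(f_\ell)-1$, with $f_{\ell-1}$ and $f_\ell$ adjacent in $\G$, gives (i) and (iii). For (ii), your walk enters each $f_\ell$ at $x_{\ell-1}$ and skips $f_\ell$ when $x_{\ell-1}=x_\ell$, which is a leg of the caterpillar. This is a valid strict temporal walk on distinct edges. It matches how the paper actually uses $e\leadsto e'$; for instance, the proof of Lemma~\ref{lem:basic-reachability} composes walks.

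One correction to your side remark on paths: naive shortcutting does not always keep the first and last edges. Take the caterpillar $ab\,(1),\ bc\,(2),\ ca\,(3),\ ad\,(4)$. Your walk from $ab$ to $ad$ is $a,b,c,a,d$, and cutting the loop at $a$ deletes $ab$. The fix is easy. First make the interior vertices distinct. If the walk then revisits the outer endpoint of $e_1$ (resp.\ $e_2$), traverse $e_1$ (resp.\ $e_2$) in the opposite direction and splice at the revisit. If $e_1$ and $e_2$ share a vertex, use them consecutively. In the example this gives the path $b,a,d$.
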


In the particular case of a full-range graph, the caterpillar of the largest edge contains all the edges of the graph.
\begin{figure}[h]
  \centering
    \begin{tikzpicture}[yscale=.6]
      \node[defnode] (0) at (0, 1) {};
      \node[defnode] (1) at (1, 0) {};
      \node[defnode] (2) at (1, 2) {};
      \node[defnode] (3) at (2, 1) {};
      \node[defnode] (4) at (3, 1) {};
      \tikzstyle{every node}=[fill=white,inner sep=1pt,pos=.5,font=\footnotesize]
      \draw (0) -- node {1} (1);
      \draw (0) -- node {2} (2);
      \draw (1) -- node {3} (2);
      \draw (1) -- node {6} (3);
      \draw (2) -- node {4} (3);
      \draw (3) -- node {5} (4);
    \end{tikzpicture}
    \quad
    \begin{tikzpicture}
      \path (0,-.4) coordinate (bidon);
      \node[defnode] (0) at (2, 0) {};
      \node[defnode] (1) at (1, 0) {};
      \node[defnode] (11) at (3, 1) {};
      \node[defnode] (111) at (5, 0) {};
      \node[defnode] (2) at (3, 0) {};
      \node[defnode] (3) at (4, 0) {};
      \node[defnode] (4) at (4, 1) {};
      \tikzstyle{every node}=[fill=white,inner sep=1pt,pos=.5,font=\footnotesize]
      \draw (0) -- node {1} (1);
      \draw (0) -- node {2} (2);
      \draw (11) -- node {3} (2);
      \draw (111) -- node {6} (3);
      \draw (2) -- node {4} (3);
      \draw (3) -- node {5} (4);
    \end{tikzpicture}
    \caption{\label{fig:full-range}Example of a full-range graph and a caterpillar from the largest edge.}
  \end{figure}

Finally, note that another notion of caterpillar was used in the gossip literature~\cite{west_acyclic1}. The version from~\cite{west_acyclic1} is different, it is a standard caterpillar which does not repeat vertices.

\section{The lemmas}

From this point on, the considered temporal graphs are always \emph{happy} and \emph{compressed}, and we do not recall this fact. We also refer to contiguous caterpillars just as caterpillars. We start by a few general lemmas. Then, we successively examine properties of acyclic graphs, acyclic \TC graphs, two-path-free graphs, two-path-free \TC graphs, \TC graphs, minimal graphs, and minimal \TC graphs. We also discuss the relation of some of these with pivotability.

\subsection{General lemmas}

Reachability among vertices is not transitive in temporal graphs. However, reachability among edges is transitive. This and others properties are given in Lemma~\ref{lem:basic-reachability}.
\begin{lemma}[Basic reachability facts]
  \label{lem:basic-reachability}~\\
  (i) If $e_1 \leadsto e_2$ and $e_2 \leadsto e_3$, then $e_1 \leadsto e_3$.\\
  (ii) If $v \leadsto e_1$ and $e_1 \leadsto e_2$, then $v \leadsto e_2$.\\
  (iii) If $v \leadsto e$ and $e \leadsto v'$, then $v \leadsto v'$.\\
  (iv) If $v\leadsto u$, then $e^-(v) \leadsto u$; $v \leadsto e^+(u)$; and $e^-(v) \leadsto e^+(v)$.\\
  (v) If $\G \in \TC$, then for all $u,v \in V$, $e^-(u)\leadsto v;u \leadsto e^+(v)$; and $e^-(u) \leadsto e^+(v)$.\\
  (vi) If $e_1$ and $e_2$ are distinct edges, then either $e_1 \not\leadsto e_2$ or $e_2 \not\leadsto e_1$.
\end{lemma}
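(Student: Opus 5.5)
The plan is to work directly from the definitions. Throughout we use that in a happy graph a temporal walk traverses edges with strictly increasing labels: it is non-decreasing, consecutive edges share a vertex, and properness forbids equal labels at a shared vertex; walks are also simple. The central tool is concatenation: if a walk $W_1$ ends by traversing an edge $e$ (or ends at a vertex), and a walk $W_2$ starts by traversing $e$ (or starts at that vertex), then $W_1$ followed by the remainder of $W_2$ is again a temporal walk. Labels remain increasing across the junction, so no edge is repeated. Most items are then one-line instances of this splicing.

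For (i), take a walk $W_1$ traversing $e_1$ then $e_2$, and a walk $W_2$ traversing $e_2$ then $e_3$. Truncate $W_1$ right after $e_2$ and splice on the suffix of $W_2$ from $e_2$; the result traverses $e_1$ then $e_3$. A subtlety is that $\leadsto$ between edges was stated for paths. If paths are genuinely required, we shorten the spliced walk to a path by removing closed subwalks. We must check that this keeps $e_1$ and $e_3$: we choose the cut so that it stays in the portion between them, or else we observe that the paper clearly intends walks, since paths and walks coincide for reachability here. (ii) and (iii) are the same splice, with a prefix starting at $v$ and/or a suffix ending at $v'$.

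For (iv), let $W$ be a walk from $v$ to $u$, with first edge $f$ incident to $v$. If $f=e^-(v)$ we are done. Otherwise $\lambda(e^-(v))<\lambda(f)$, so prepending $e^-(v)$, traversed into $v$, gives a temporal walk. It is simple because all later labels exceed $\lambda(f)>\lambda(e^-(v))$, and it traverses $e^-(v)$ and ends at $u$. The statement $v\leadsto e^+(u)$ is symmetric, appending $e^+(u)$ out of $u$. For $e^-(v)\leadsto e^+(v)$, take the walk $e^-(v),e^+(v)$ through $v$. This works when the two edges differ, i.e.\ when $\deg v\ge 2$; the degree-one case is a degenerate edge case to be handled by convention, since $e\leadsto e$ holds trivially via the single-edge walk. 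Item (v) follows by applying (iv) to $u\leadsto v$, which holds by temporal connectivity. For $e^-(u)\leadsto e^+(v)$, combine the prefix and suffix constructions: from $u\leadsto v$, both prepend $e^-(u)$ and append $e^+(v)$, or chain via (i)–(ii).

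For (vi), suppose $e_1\leadsto e_2$ and $e_2\leadsto e_1$ with $e_1\neq e_2$. Labels strictly increase along walks, so $\lambda(e_1)<\lambda(e_2)<\lambda(e_1)$, a contradiction.

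The main obstacle is bookkeeping rather than ideas. We must ensure the spliced sequences remain simple walks (paths if required), and we must treat degenerate cases: the walk being empty when $v=u$, and a vertex having a single edge. Strict label monotonicity settles simplicity, so I would state that fact first as a preliminary observation and derive every item from it.
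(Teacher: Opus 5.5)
Your overall plan matches the paper's: splice the two witnessing walks at the shared edge, prepend $e^-(v)$ or append $e^+(u)$, and use strictly increasing labels for (vi). However, your central splicing rule silently assumes that both walks cross the shared edge in the same direction, and it fails otherwise.

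Nothing forces the same direction. An edge $e_2=xy$ may be crossed $x\to y$ by the walk witnessing $e_1\leadsto e_2$ and $y\to x$ by the walk witnessing $e_2\leadsto e_3$. For instance, take the star with edges $wx,xy,xz$ labeled $1,2,3$, with $W_1=w,x,y$ and $W_2=y,x,z$. Truncating $W_1$ right after $e_2$ leaves you at $y$, while the remainder of $W_2$ after $e_2$ starts at $x$. The concatenation is not even a walk. The same problem arises in (ii) and (iii), which you reduce to the same splice.

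The missing case is exactly the second case of the paper's proof: cut $W_1$ just \emph{before} $e_2$.
\begin{itemize}
\item The prefix of $W_1$ ends at $x$, contains $e_1$, and has last label $<\lambda(e_2)$.
\item The suffix of $W_2$ after $e_2$ starts at $x$, contains $e_3$, and has first label $>\lambda(e_2)$.
\end{itemize}
Gluing them gives a temporal walk through $e_1$ and then $e_3$ that drops $e_2$ altogether. With this two-case splice (same direction: keep $e_2$ once; opposite directions: drop it), your argument goes through.

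Two smaller points. First, the parenthetical variant of your tool, a walk ending at a vertex followed by a walk starting there, is false in temporal graphs. This is precisely why vertex reachability is not transitive: labels need not increase across such a junction. You never actually use it, since (ii) and (iii) splice at an edge, so drop it.

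Second, in (iv) the paper obtains the third claim ``by combining these extensions''. The intended statement is therefore almost certainly $e^-(v)\leadsto e^+(u)$, which is the form (v) relies on. Your prepend-and-append argument in (v) already proves that. Your two-edge walk through $v$ only establishes the literal reading $e^-(v)\leadsto e^+(v)$, which does not even use the hypothesis $v\leadsto u$.
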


\begin{proof}
  Denote by $W_{x,y}$ a temporal walk from $x\in V\cup E$ to $y\in V\cup E$ witnessing $x\leadsto y$, and let $e_i=(u_iv_i,t_i)$ be an edge.\\
  (i) Let $W_{e_1,e_2}$ and $W_{e_2,e_3}$ be two temporal walks. Either these walk traverse $e_2$ in the same direction, or they traverse it in opposite directions. In the first case, remove $e_2$ from $W_{e_1,e_2}$ and compose the resulting walk with $W_{e_2,e_3}$. In the second case, remove $e_2$ from both walks and compose the resulting walks. In both cases, we obtain a temporal walk from $e_1$ to $e_3$.\\
  (ii) Same proof as (i), substituting $e_1,e_2,e_3$ with $v,e_1,e_2$, respectively.\\
  (iii) Same proof as (i), substituting $e_1,e_2,e_3$ with $v,e,v'$, respectively.\\
  (iv) Unless the first edge $v\leadsto u$ is already $e^-(v)$, we obtain $W_{e^-(v),u}$ by traversing first the edge $e^-(v)$ from $n^-(v)$ to $v$ and then continuing on $v\leadsto u$. Similarly, we obtain $v \leadsto e^+(u)$ by appending $e^+(u)$ to $W_{v,u}$. Finally, we obtain $e^-(v) \leadsto e^+(v)$ by combining these extensions.\\
  (v) Direct consequence of (iv).\\
  (vi) If $e_1\leadsto e_2$ and $e_2 \leadsto e_1$, then $\lambda(e_1)<\lambda(e_2)$ and $\lambda(e_2) < \lambda(e_1)$ (contradiction).\qed
\end{proof}

\begin{lemma}{
    \label{lem:unique}
    Let $e,e'$ be two distinct edges. If $\lambda(e)$ is unique and $\lambda(e) < \lambda(e')$, then $e \leadsto e'$.}
\end{lemma}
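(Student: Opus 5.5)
The plan is to build a contiguous caterpillar from $e'$ and use uniqueness to force $e$ to lie in it. The graph is compressed, so such a caterpillar exists. By Lemma~\ref{lem:caterpillar}(i), it contains exactly one edge $f$ for each label in $[1,\lambda(e')]$. Since $1 \le \lambda(e) < \lambda(e')$, some edge $f$ of the caterpillar has $\lambda(f)=\lambda(e)$. By uniqueness of $\lambda(e)$ (Definition~\ref{def:unique}), no edge other than $e$ carries this label, so $f=e$. Hence $e$ itself belongs to the caterpillar from $e'$.

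Next I would apply Lemma~\ref{lem:caterpillar}(ii) to the pair $e_1=e$, $e_2=e'$, both in the caterpillar with $\lambda(e)<\lambda(e')$, which gives $e\leadsto e'$ directly.

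One subtlety needs care: the caterpillar may contain copies of vertices, so it is not literally a subgraph of $\G$. I would check that the identification of $f$ with $e$ is legitimate. Each caterpillar edge is constructed as an actual edge of $\G$; only endpoints are duplicated when revisited. So "the edge with label $\lambda(e)$ in the caterpillar" corresponds to a genuine edge of $\G$ with that label, which must be $e$. Likewise, I would check that the reachability in Lemma~\ref{lem:caterpillar}(ii) is reachability in $\G$: the walk along the caterpillar projects to a temporal walk in $\G$ via the natural map from copies to original vertices, labels increase, and properness keeps edges distinct.

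I expect the main obstacle to be only this bookkeeping about copies of vertices. If Lemma~\ref{lem:caterpillar}(ii) is read as a statement about $\G$, the proof is immediate. Otherwise, I would argue directly: following the caterpillar construction backward from $e'$ gives a chain of pairwise adjacent edges with consecutive labels $\lambda(e),\lambda(e)+1,\dots,\lambda(e')$, starting at $e$. Consecutive adjacency with increasing labels yields $e\leadsto e'$ by transitivity, using Lemma~\ref{lem:basic-reachability}(i) on adjacent pairs.
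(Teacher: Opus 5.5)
Your proof is correct and is the paper's argument: the caterpillar from $e'$ contains an edge of label $\lambda(e)$, uniqueness forces that edge to be $e$, and Lemma~\ref{lem:caterpillar}(ii) gives $e\leadsto e'$. Your extra check about duplicated vertices is sound. Your fallback is also the right way to phrase it: apply Lemma~\ref{lem:basic-reachability}(i) to consecutive adjacent caterpillar edges, since the chain of consecutive labels need not literally be a single walk (for instance, three consecutive edges can share one vertex).
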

\begin{proof}
  Any caterpillar from $e'$ must have an edge with label $\lambda(e)$. As this label is unique, this implies that $e$ belongs to the caterpillar of $e'$, thus by Lemma~\ref{lem:caterpillar}~(ii), $e\leadsto e'$.\qed
  \end{proof}

\subsection{Acyclic graphs}

\begin{lemma}
  Acyclic graphs have a triangle-free footprint.
\end{lemma}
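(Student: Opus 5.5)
We argue by contraposition: assuming the footprint of \G contains a triangle, we exhibit a temporal cycle. Let $u,v,w$ be the three vertices of the triangle, with edges $uv$, $vw$, $wu$. Since \G is happy, these three edges carry a single label each, and the labels are pairwise distinct because any two of them share an endpoint and the labeling is proper. Up to renaming the vertices, we may assume $\lambda(uv) < \lambda(vw) < \lambda(wu)$.

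We then read off the cycle directly. Starting at $u$, we traverse $uv$ at time $\lambda(uv)$ to reach $v$. We continue along $vw$ at time $\lambda(vw) > \lambda(uv)$ to reach $w$. We close with $wu$ at time $\lambda(wu) > \lambda(vw)$, returning to $u$. This gives the sequence $\langle (uv,\lambda(uv)),(vw,\lambda(vw)),(wu,\lambda(wu))\rangle$. Its underlying sequence of edges is a walk in the footprint, and its times are strictly increasing. It is therefore a (strict) temporal walk from $u$ to itself.

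The only point needing care is the paper's convention that temporal walks must not repeat edges, which rules out degenerate back-and-forth cycles of length $2$. Our walk uses three distinct edges, so it is a genuine temporal cycle. This contradicts acyclicity of \G, and the footprint must be triangle-free. The argument does not use compression, caterpillars, or the reachability lemmas. No step is a real obstacle; the ordering of the three labels is the whole content.
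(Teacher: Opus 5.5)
Your proof is correct and is the paper's argument made explicit: the paper simply says that a triangle induces a temporal cycle at one of its three vertices. Your ordering $\lambda(uv)<\lambda(vw)<\lambda(wu)$ is always achievable by taking $v$ to be the common endpoint of the two smallest-labeled edges, and it identifies that vertex as $u$.
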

\begin{proof}
  Any triangle in the footprint induces a temporal cycle for one of the three vertices.\qed
\end{proof}

\begin{lemma}
  \label{lem:lifetime-n}
  If $\G$ is acyclic, then $L(\G) < n$.
\end{lemma}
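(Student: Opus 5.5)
We use the contiguous caterpillar of a largest edge. Let $e=\arg\max(\lambda)$, so $\lambda(e)=L$. By Lemma~\ref{lem:caterpillar}~(i) and (iii), any caterpillar $\mathcal{C}$ from $e$ has exactly $L$ edges $e_1,\dots,e_L$ with $\lambda(e_i)=i$. By construction, each $e_{i}$ is adjacent in $\G$ to $e_{i+1}$. The plan is to show that, in an acyclic graph, the caterpillar never has to create a copy of a vertex. Its footprint is then a tree that embeds injectively into $\G$, with $L$ edges and hence $L+1$ distinct vertices of $V$. This gives $L+1\le n$, i.e.\ $L<n$.

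The key step is the following claim. When the unfolding process adds a new edge $e_i=xw$, where $x$ is an endpoint of $e_{i+1}$ and $w$ is the new endpoint, the vertex $w$ does not already belong to the caterpillar. Suppose instead that $w$ was already present, i.e.\ $w$ is an endpoint of some $e_j$ with $j>i$. I would then build a temporal cycle through $w$. Start at $w$ and traverse $e_i$ at time $i$ to reach $x$. Then follow the chain $e_{i+1},e_{i+2},\dots$ upward until the first edge $e_j$ incident to $w$, and finish by crossing $e_j$ into $w$.

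This needs the upward walk to be a genuine temporal walk, which is the point of Lemma~\ref{lem:caterpillar}~(ii). More concretely, consecutive caterpillar edges $e_k,e_{k+1}$ share a vertex, and $\lambda(e_k)<\lambda(e_{k+1})$. So the walk can be taken along the spine from $x$: inside the caterpillar, each $e_k$ is attached to the endpoint of $e_{k+1}$ from which the unfolding continued, and the real walk uses these shared vertices. The resulting walk starts and ends at $w$, is nonempty, and uses distinct edges because their labels are distinct. It is therefore a temporal cycle, contradicting acyclicity.

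The main obstacle is making the walk from $x$ up to $w$ precise. The caterpillar is not a path: legs hang off the spine, and $w$ might appear as a leg vertex rather than a spine vertex. I would handle this by choosing $j>i$ minimal with $w\in e_j$. I would then argue by induction on $k$ that, for $i<k\le j$, there is a temporal walk from $x$ that ends by traversing $e_k$ and whose earlier edges all have labels in $(i,k)$. The induction step uses only that $e_{k-1}$ and $e_k$ are adjacent in $\G$. A walk arriving through $e_{k-1}$ ends at one of its endpoints, and that endpoint may not be the vertex shared with $e_k$. In that case I would instead reuse the walk to $e_{k-2}$, following the argument in the proof of Lemma~\ref{lem:basic-reachability}~(i). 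If this gets messy, a cleaner fallback is available. By Lemma~\ref{lem:caterpillar}~(ii), $e_i\leadsto e_j$ via a walk that traverses $e_i$ away from $w$, since the caterpillar grows from $x$'s side. This walk ends through $e_j$, and it can be extended or truncated to end at $w$, closing the cycle.
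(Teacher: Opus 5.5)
Your proof is correct and follows the paper's argument: take a caterpillar from the largest edge, show that a repeated vertex yields a temporal cycle through the caterpillar, and conclude from $L+1$ distinct vertices that $L<n$. Your explicit cycle (from $w$ to $x$ via $e_i$, then up the spine to the first $e_j\ni w$) spells out what the paper's one-line ``this vertex can reach itself through the caterpillar'' leaves implicit.

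One thing to tidy is the induction in your last paragraph. The right invariant is that, after time $k$, the walk sits at the vertex shared by $e_k$ and $e_{k+1}$, which is exactly the spine walk of your second paragraph. ``Reuse the walk to $e_{k-2}$'' breaks when two consecutive legs hang at one spine vertex. Lemma~\ref{lem:caterpillar}~(ii) alone cannot fix the directions either, because $e_i\leadsto e_j$ already holds trivially through $w$.
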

\begin{proof}
  By contradiction, suppose that $L(\G) \ge n$. Then, by Lemma~\ref{lem:caterpillar}~(iii), a caterpillar from the largest edge has $L(\G) \ge n$ edges, which implies at least $n+1$ vertices, thus at least two vertices of the caterpillar correspond to the same original vertex in $\G$. This vertex can reach itself through the caterpillar, thus it has a temporal cycle.\qed
\end{proof}

\subsection{Acyclic \TC graphs}

In addition to the properties of acyclic graphs, acyclic \TC graphs satisfy the following additional properties.

\begin{lemma}
  \label{lem:extremal-matching}
  If $\G$ is acyclic and \TC, then $E^-$ and $E^+$ are two perfect matchings. If $n>2$, these matchings are also disjoint.\\
  (see also Theorem 1 in~\cite{baker72} and Lemma 2.3 in~\cite{west_acyclic1}.)
\end{lemma}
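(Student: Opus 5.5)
The plan is to prove that $E^-$ is a perfect matching by showing that every edge $e\in E^-$ is the minimum edge of \emph{both} of its endpoints. The case of $E^+$ is then symmetric: reverse time by replacing each label $t$ with $L+1-t$, which preserves acyclicity and \TC. Once every edge of $E^-$ is minimal at both of its endpoints, two distinct edges of $E^-$ cannot share a vertex, since a vertex has only one minimum edge. Every vertex $v$ is also covered by $e^-(v)$, because \TC with $n\ge 2$ forces every vertex to have at least one edge. Together these give that $E^-$ is a perfect matching.

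\textbf{Key step.} Let $e=e^-(u)=uw$, and suppose for contradiction that $e\neq e^-(w)$. Then $w$ has an edge $f=e^-(w)=wx$ with $\lambda(f)<\lambda(e)$, and $x\neq u$ because the footprint is simple. By \TC we have $u\leadsto x$ through some temporal walk $W$. I intend to derive a temporal cycle, either through $u$ or through $w$, as follows.

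\begin{itemize}
\item By Lemma~\ref{lem:basic-reachability}~(iv), $e^-(x)\leadsto$ every vertex reachable from $x$. The more useful fact, however, comes from applying \TC to $x$ and $u$.
\item Consider a temporal walk $W'$ from $x$ to $u$. Its last edge reaches $u$ at some time $\ge\lambda(e)$, because $e$ is the minimum edge at $u$; it equals $\lambda(e)$ only if the last edge is $e$ itself.
\item If the last edge is not $e$, then prepending $f$ (taken at time $\lambda(f)$, going from $w$ to $x$) would require $\lambda(f)<$ the first label of $W'$. Since this is not guaranteed, I will instead use the walk from $w$ to $u$.
\item Take a walk $W''$ from $w$ to $u$ and write $t$ for its arrival time. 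If $W''$ is not just the edge $e$, then $t>\lambda(e)$ by minimality of $e$ at $u$. Appending $e$ would then run backward in time.
\item So I look at it the other way around: starting at $u$ at time $\lambda(e)$, cross $e$ to $w$. From the earliest edge $f$ of $w$, every vertex is reachable (Lemma~\ref{lem:basic-reachability}~(v)), but $f$ precedes $e$, so this route is not directly usable.
\end{itemize}

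The cleaner route is probably a counting argument. Following Baker--Shostak and West, every vertex $v$ must reach all $n-1$ others starting from its first edge. Acyclicity plus \TC implies that for each $v$, the edge $e^-(v)$ is a source edge, i.e.\ $e^-(v)\leadsto V$. I would first establish that for each $v$, if $e^-(v)=vw$ is not also the minimum edge at $w$, then $w\leadsto v$ forces the walk from $w$ to enter $v$ at a time $>\lambda(e^-(v))$. Combining this with $v\leadsto w$ via some walk yields a cycle. This last combination is the main obstacle: ordinary reachability is not transitive, so I will need to argue with edges instead, using Lemma~\ref{lem:basic-reachability}~(i). The idea is to find edges $g\leadsto g'$ with $g'\leadsto g$-type ordering violations, or a repeated vertex in a walk, as in the proof of Lemma~\ref{lem:lifetime-n}.

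\textbf{Disjointness for $n>2$.} Suppose $e=uw$ lies in $E^-\cap E^+$. Then $e$ is simultaneously the minimum and the maximum edge of both $u$ and $w$, so both $u$ and $w$ have degree $1$. Hence $\{u,w\}$ is a connected component of the footprint. For $n>2$ this contradicts \TC, since the remaining vertices could not reach $u$.
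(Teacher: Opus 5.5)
\textbf{There is a genuine gap: the key step is never proved.} Your reduction is sound: it suffices to show that every edge of $E^-$ is the minimum edge at both of its endpoints, and $E^+$ follows by time reversal. Your disjointness argument is also correct once the matching part is known. It differs from the paper's, which picks a third vertex $y\notin\{u,w\}$ and gets the cycle $y\leadsto e\leadsto y$ directly from Lemma~\ref{lem:basic-reachability}~(v), so it does not need the matching part at all.

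The problem is the key step itself. The bullets are abandoned attempts, and the ``counting argument'' paragraph ends by calling the needed combination ``the main obstacle'' without resolving it. Its intermediate claim also fails as stated: the edge $vw$ itself, traversed from $w$, is a walk reaching $v$ at exactly $\lambda(e^-(v))$. So the contradiction from $e^-(u)=uw\neq e^-(w)$ is never derived, and that is the whole content of the lemma.

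The missing idea is short, and you had already written down the right walk, namely $W$ from $u$ to $x$. Your attempts instead used walks \emph{ending} at $u$, where minimality of $e$ at $u$ works against you. Observe that $x,w,u$ (crossing $f$ then $e$) is a temporal walk because $\lambda(f)<\lambda(e)$, and prepend it to $W$. There are two cases:
\begin{itemize}
\item If $W$ starts with an edge $g\neq e$ of $u$, then $\lambda(g)>\lambda(e)$ since $e=e^-(u)$. So $f$, $e$, $W$ is a closed temporal walk at $x$.
\item If $W$ starts with $e$, its remainder is a nonempty walk (as $w\neq x$) leaving $w$ after $\lambda(e)>\lambda(f)$. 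So $f$ followed by this remainder is a closed temporal walk at $x$.
\end{itemize}
In both cases labels strictly increase, so no edge repeats. You get a temporal cycle at $x$, contradicting acyclicity. This is exactly the paper's argument; its $u,v,w$ are your $x,w,u$.
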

\begin{proof}
  By contradiction, let $\G$ be an acyclic \TC graph such that two adjacent edges $uv$ and $vw$ are in $E^-$. Only one of $uvw$ and $wvu$ is a temporal walk, wlog say $uvw$ (thus $e^{-}(v)=uv$ and $e^{-}(w)=vw$, see the picture). As \G is \TC, $w$ must have a temporal walk to $u$. Either this walk starts with $vw$, or it starts with another edge of $w$. In the first case, one can compose $uv$ with the part of this walk from $v$ (because $uv=e^{-}(v)$). In the second case, one can compose $uvw$ with this walk (because $vw=e^{-}(w)$). In both cases, this creates a temporal cycle for $u$. (The proof is symmetric for $E^+$.) For the second part, suppose that $n>2$ and there is an edge $uv \in E^- \cap E^+$. There must exist a vertex $w\notin\{u,v\}$. As $e\in E^+$ and $\G\in \TC$, we have by~\Cref{lem:basic-reachability}~(v) that $w\leadsto e$, and as $e\in E^-$, we have $e\leadsto w$. Now, because $w\notin\{u,v\}$, $w \leadsto e \leadsto w$ implies a temporal cycle at $w$.

\begin{center}
    \begin{tikzpicture}[scale=1.2]
      \tikzstyle{every node}=[defnode]
      \path (0,0) node[label=left:$u$] (a){};
      \path (1,0) node[label=below right:$v$] (b){};
      \path (2,.3) node[label=below:$w$] (c){};
      \path (.1,.15) coordinate (aa);
      \tikzstyle{every node}=[]
      \draw (a) -- node[pos=.87,yshift=-3pt] {\normalsize -} (b);
      \draw (b) -- node[pos=.84,yshift=-3pt] {\normalsize -} (c);
      \draw (c) edge[decorate, decoration={snake, amplitude=.3mm, segment length=3mm},bend right=30,shorten <=2pt,->] (aa);  
    \end{tikzpicture}
  \end{center}\qed
\end{proof}

\begin{lemma}
  If $\G$ is acyclic and \TC, then $n$ is even.
\end{lemma}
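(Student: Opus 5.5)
This is an immediate consequence of Lemma~\ref{lem:extremal-matching}. If $\G$ is acyclic and \TC, that lemma states that $E^-=\{e^-(v)\mid v\in V\}$ is a perfect matching of the footprint. A perfect matching covers every vertex exactly once with disjoint edges, each containing two vertices. Hence $n=2|E^-|$, which is even.

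The only point to check is that the hypotheses of Lemma~\ref{lem:extremal-matching} apply in every case.

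\begin{itemize}
\item If $n=1$, there are no edges, so $E^-$ is empty and cannot cover the single vertex.
\item A single vertex is trivially acyclic and \TC. So the statement as written fails for $n=1$.
\item The claim must therefore implicitly assume $n\ge 2$. The same implicit assumption is already needed for $e^-(v)$ to be defined for every $v$, and hence for Lemma~\ref{lem:extremal-matching} to make sense.
\item For $n\ge 2$, temporal connectivity forces every vertex to have at least one incident edge, so $e^-(v)$ is well defined for all $v$.
\end{itemize}

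The perfect-matching property then yields the partition of $V$ into $|E^-|$ pairs, and $n$ is even. There is no real obstacle; the substantive work was done in Lemma~\ref{lem:extremal-matching}. Using $E^+$ instead of $E^-$ gives the same conclusion.
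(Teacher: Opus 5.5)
Your proof is correct and is exactly the paper's argument: evenness follows directly from $E^-$ being a perfect matching (\Cref{lem:extremal-matching}), so $n=2|E^-|$. Your remark on the degenerate case $n=1$ is a valid caveat that the paper leaves implicit: a single vertex is trivially acyclic and \TC{} yet odd, and $E^-$ is then empty, so \Cref{lem:extremal-matching} itself also fails there.
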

\begin{proof}
  Follows directly from the fact that $E^-$ is a \emph{perfect} matching (\Cref{lem:extremal-matching}).\qed
\end{proof}

\begin{lemma}
  If $\G$ is acyclic and \TC, then its footprint is biconnected (has no cut vertex).
\end{lemma}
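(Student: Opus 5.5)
We argue by contradiction. Suppose the footprint has a cut vertex $c$ (connectivity is clear from \TC). Removing $c$ separates $V\setminus\{c\}$ into at least two parts $A$ and $B$ with no footprint edge between them. Every temporal walk between $A$ and $B$ then passes through $c$, entering $c$ from one side and leaving on the other. The plan is to use the rigid matching structure of Lemma~\ref{lem:extremal-matching} to produce two walks through $c$ that can be spliced into a temporal cycle.

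First we split off the case $n=2$, where there is no cut vertex, and assume $n>2$. Pick $a\in A$ and $b\in B$. Since $\G\in\TC$, there is a walk $a\leadsto b$. Its portion up to $c$ enters $c$ through some edge $e_A$ towards $A$ at time $t_1$. Its portion after $c$ leaves through some edge $f_B$ towards $B$ at time $t_2>t_1$. Symmetrically, a walk $b\leadsto a$ enters $c$ through an edge $e_B$ from the $B$ side at time $s_1$ and leaves through an edge $f_A$ towards $A$ at time $s_2>s_1$. Wlog $t_1<s_1$, taking the walk whose entry into $c$ is earlier.

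The key step is to close a cycle. We want a vertex that, within its own side, can reach $c$ early and be reached back from $c$ later. Using side $A$, the idea is to follow the $a\to c$ prefix, arriving at $c$ at time $t_1$, and then return into $A$ via $f_A$ at time $s_2>s_1>t_1$. This combined walk goes from $a$ into $c$ and back into $A$, but it does not necessarily return to $a$ itself. To force a genuine cycle, I would instead work with extremal edges. By Lemma~\ref{lem:extremal-matching}, $e^-(c)$ and $e^+(c)$ are distinct edges, and each lies on one side. Suppose $e^-(c)=cx$ with $x\in A$. Pick any $y\in B$. By Lemma~\ref{lem:basic-reachability}(v), $e^-(x)\leadsto y$, so a walk from $x$ reaches $B$ and must pass through $c$. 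Now consider a walk $y\leadsto x$. It must enter $c$ from $B$ and then go to $A$. The idea is to show that $x$ reaches $y$ via a walk leaving $x$ that necessarily starts in $A$ and reaches $c$, while the matching structure forces $cx$ to be the edge $x$ uses. Because $cx$ is the earliest edge of $c$, $x$ can reach $c$ at time $\lambda(cx)$, which precedes every other edge of $c$. Hence any walk from $B$ that arrives at $c$ and continues into $A$ can be prefixed by $x\to c$ via $cx$. In particular, $x$ followed by the $c\leadsto A$ continuation of a walk $y\leadsto x$ ends at $x$, which yields a temporal cycle at $x$ as long as that continuation does not reuse $cx$.

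The main obstacle is exactly this non-reuse condition. The continuation from $c$ into $A$ toward $x$ might use $cx$ itself, but this is impossible, because $cx$ is earlier than the edge by which the walk entered $c$ from $B$. So the continuation uses an edge $\ne cx$ and the walk is simple, which gives a temporal cycle at $x$ and contradicts acyclicity. I would double-check the case where $n^-(c)$ lies in $B$. That case is symmetric after swapping the roles of $A$ and $B$, since the choice of side was arbitrary.
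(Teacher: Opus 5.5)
Your argument is correct and is the paper's proof mirrored in time. The paper takes the last edge $uv=e^+(u)$ of the cut vertex and notes that a walk from $v$ to the other side must reach $u$ before $\lambda(uv)$. You take the first edge $cx=e^-(c)$ and note that a walk from the other side into $x$ must pass through $c$ after $\lambda(cx)$, so prefixing $cx$ closes a simple cycle at $x$. The opening $a,b$ attempt, the appeal to Lemma~\ref{lem:extremal-matching}, and the unproved remark that the matching structure forces $x$ to use $cx$ are not needed and should be dropped.
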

\begin{proof}
    Let $u$ be a vertex whose removal would disconnect the footprint into at least two nonempty sets of vertices $V_1$ and $V_2$. Wlog, suppose that $v=n^+(u)$ belongs to $V_2$. As $\G$ is \TC, $v$ must be able to reach $V_1$ through $u$, which implies an arrival at $u$ \emph{before} its last edge $uv$. Composing such walk with $uv$ give a temporal cycle for $v$ (see the picture).
\begin{center}
  \begin{tikzpicture}
    \tikzstyle{every node}=[defnode]
    \path (1,0) node[label=above:$u$] (c){};
    \path (c)+(30:1) node (r1){};
    \path (c)+(10:.7) coordinate (r2){};
    \path (c)+(-10:.7) coordinate (r3){};
    \path (c)+(-30:1) node[label=below:$v$] (r4){};
    \path (c)+(150:1) node (l1){};
    \path (c)+(170:.7) coordinate (l2){};
    \path (c)+(190:.7) coordinate (l3){};
    \path (c)+(210:1) node (l4){};
    \path (c)+(.2,0) coordinate (cc){};

    \draw (c)--(r1);
    \draw[dashed] (c)--(r2);
    \draw[dashed] (c)--(r3);
    \draw (c)--(r4);
    \draw (c)--(l1);
    \draw[dashed] (c)--(l2);
    \draw[dashed] (c)--(l3);
    \draw (c)--(l4);

      \tikzstyle{every node}=[inner sep=1pt,font=\footnotesize]
      \draw[very thick] (c)-- node[below=1.5pt,pos=.15] {+}(r4);
        \draw (r4) edge[thick,decorate, decoration={snake, amplitude=.3mm, segment length=3mm},bend right=30,shorten <=2pt] (r2);
        \draw (r2) edge[thick,decorate, decoration={snake, amplitude=.3mm, segment length=3mm},bend right=5,->] (cc);
    \end{tikzpicture}
  \end{center}\qed
\end{proof}

\begin{lemma}
  If $\G$ is acyclic and \TC, then $L(\G) \le n/2$.
\end{lemma}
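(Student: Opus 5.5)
The plan is to work with the caterpillar $C$ of the largest edge $e_L=\arg\max(\lambda)$. By Lemma~\ref{lem:caterpillar}~(i) and (iii), $C$ has exactly one edge $e_t$ of each label $t\in[1,L]$, so it has $L$ edges and $L+1$ vertex copies. I will then use the perfect matching $E^-$ of Lemma~\ref{lem:extremal-matching} to find roughly $L$ further vertices outside $C$, which gives $n\ge 2L$.

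First I will show that all vertex copies in $C$ are distinct vertices of $\G$. The argument is the one in the proof of Lemma~\ref{lem:lifetime-n}: if two copies coincide, then Lemma~\ref{lem:caterpillar}~(ii) gives a temporal walk from that vertex back to itself, which is a temporal cycle. So $C$ is a subgraph of $\G$ on $L+1$ distinct vertices.

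Next, for every vertex $v$ of $C$, I look at its matched partner $n^-(v)$ in $E^-$. I split $V(C)$ into two parts:
\begin{itemize}
\item $A$, the vertices whose $e^-$-partner lies outside $C$;
\item $B$, the vertices whose $e^-$-partner lies inside $C$.
\end{itemize}
Since $E^-$ is a matching, the partners of the vertices of $A$ are pairwise distinct and lie outside $C$. This yields $n\ge (L+1)+|A|$, and $|B|$ is even. The target is therefore the bound $|B|\le 2$ (or, more weakly, $|B|\le 3$ with parity), which gives $n\ge 2L$.

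To bound $B$, I will consider an edge $vw\in E^-$ with both endpoints in $C$ and distinguish two cases.
\begin{itemize}
\item If $vw$ is not an edge of $C$, then $C$ contains a static path between $v$ and $w$. I will try to show that, because of the contiguity structure, one of $v,w$ reaches the other inside $C$ by a temporal walk $W$ (using Lemma~\ref{lem:caterpillar}~(ii) together with Lemma~\ref{lem:basic-reachability}~(i)--(iii)). Since $vw=e^-(v)$, prefixing $W$ by $vw$, or suffixing it, then closes a temporal cycle, which is a contradiction.
\item If $vw=e_t$ is itself an edge of $C$, then $e_t=e^-(v)=e^-(w)$. In that case every caterpillar edge of smaller label must hang on the path leading to $e_t$ through $v$ or $w$, which contradicts minimality, unless $t=1$.
\end{itemize}
So essentially only $e_1$ can contribute to $B$, giving $|B|\le 2$ and $n\ge L+1+(L-1)=2L$.

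The main obstacle is the first case above: proving that an $E^-$ edge joining two caterpillar vertices always closes a temporal cycle. The caterpillar only guarantees $e_i\leadsto e_j$ for $i<j$; it does not guarantee that an arbitrary pair of caterpillar vertices is temporally ordered in the right direction. I would first try to orient the spine so that, for $v,w\in C$, the edges $e^-(v)$ and $e^+(w)$ can be chosen in $C$ appropriately, and then apply Lemma~\ref{lem:basic-reachability}~(iv). If this fails, the fallback is an induction on $L$: delete the two endpoints of $e_1$ (or of the label-$1$ matching edges), recompress, and check that the graph stays acyclic and \TC with lifetime $L-1$ on $n-2$ vertices.
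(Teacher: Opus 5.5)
Your plan matches the paper's. You use acyclicity to make the $L+1$ vertices of the caterpillar $C$ of the largest edge distinct, and you use the perfect matching $E^-$. Only the two endpoints of the label-$1$ edge can have their $E^-$-partner inside $C$, so $n\ge (L+1)+(L-1)=2L$. The counting and your Case~2 are fine; for Case~2 it suffices that $e_{t-1}$ is adjacent to $e_t$ and has a smaller label.

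The gap is Case~1, which is the crux of the argument. You only say you ``will try to show'' that $v$ and $w$ are temporally comparable inside $C$, then flag this as the main obstacle and doubt that it holds. Neither remedy works as stated:
\begin{itemize}
\item You cannot arrange for $e^-(v)$ to lie in $C$, because $e^-(v)=vw\notin C$ in this case.
\item The fallback induction breaks down. Deleting the endpoints of $e_1$ from a \TC graph does not in general leave a \TC graph, since walks between the remaining vertices may have to pass through them. For example, in $Q_3$ labeled by dimension, deleting $000$ and $100$ kills the unique walk $010\to 000\to 001$. So the induction hypothesis is not available.
\end{itemize}

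The missing step follows directly from the tools you already cite. Since $vw\notin C$ and the footprint is simple, no edge of $C$ contains both $v$ and $w$. Choose $e_i\ni v$ and $e_j\ni w$ in $C$; then $i\neq j$, say $i<j$. Trivially $v\leadsto e_i$ and $e_j\leadsto w$. Applying Lemma~\ref{lem:caterpillar}~(ii) to the temporal graph $C$ gives $e_i\leadsto e_j$ within $C$. Lemma~\ref{lem:basic-reachability}~(ii),(iii) then yield a nonempty temporal walk $W$ from $v$ to $w$ that uses only edges of $C$, and hence not $vw$. The first edge of $W$ is an edge of $v$ other than $vw=e^-(v)$. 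So traversing $vw$ from $w$ to $v$ and then following $W$ is a temporal cycle at $w$, contradicting acyclicity.

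There is also no ``right direction'' issue. An $E^-$ edge is minimum at \emph{both} of its endpoints, so you always prefix it; suffixing, which you also mention, would require $vw$ to be maximum at an endpoint. Hence whichever of $v\leadsto w$ or $w\leadsto v$ you obtain suffices. With this step filled in, your argument coincides with the paper's, which compresses it into the sentence ``this vertex can reach itself through the caterpillar.''
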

\begin{proof}
  Suppose $L(\G) \ge n/2+1$, then a caterpillar $\mathcal{C}$ from the largest edge has at least $n/2+2$ vertices. As \G is acyclic, these vertices are distinct in $\G$. We also have by~\Cref{lem:extremal-matching} that $E^-$ is a perfect matching. Now augment the caterpillar with all the edges of $E^-$ that have at least one endpoint in it (see the picture). Note that the first edge of $\mathcal{C}$ was also in $E^-$, we consider it as attached to the second node. As a perfect matching has $n/2$ edges, we now have at most $n/2$ edges from $E^-$ attached to at least $n/2+1$ vertices of $\mathcal{C}$, thus at least one edge of $E^-$ is attached twice. Call $e_1$ the first occurence and $e_2$ the second occurrence of this edge in the augmented caterpillar (see the picture on the right). Since the vertices of $\mathcal{C}$ are distinct, the other endpoint of $e_1$ (in white) must correspond to the endpoint of $e_2$ that is in $\mathcal{C}$. This vertex can reach itself through the caterpillar.
  \medskip
  
\hfill
    \begin{tikzpicture}
    \tikzstyle{every node}=[circle,inner sep=1pt, minimum size=4pt,draw, fill=blue!20,font=\footnotesize]
    \node[fill=white] (b) at (0,0) {};
    \node (a) at (1,0) {};
    \node (e) at (2,0) {};
    \node (f) at (3,0) {};
    \node (g) at (2.7,1) {};
    \node (c) at (3.3,1) {};
    \node (a2) at (4,0) {};
    \node (d) at (5,0) {};

    \tikzstyle{every node}=[fill=white,inner sep=1pt,font=\footnotesize]
    \draw (a) -- node {2} (e);
    \draw (a2) -- node {6} (f);
    \draw (a2) -- node {7} (d);
    \draw (c) -- node {5} (f);
    \draw (e) -- node {3} (f);
    \draw (f) -- node {4} (g);
      \tikzstyle{every node}=[draw,circle,inner sep=1pt]
      \path (b)+(0,-.2) coordinate (bb);
      \path (a)+(0,-.2) coordinate (aa);
      \path (a2)+(-.5,.5) node (a2a2){};
      \path (c)+(-.5,.5) node (cc){};
      \path (d)+(-.5,.5) node (dd){};
      \path (e)+(-.5,.5) node (ee){};
      \path (f)+(-.5,.5) node (ff){};
      \path (g)+(-.5,.5) node (gg){};
      \tikzstyle{every path}=[green,dash pattern=on 2pt off 1pt,thick]
      \draw (a) -- node[draw=none,fill=white,inner sep=.6pt,font=\footnotesize] {1} (b);
      \draw (cc) -- (c);
      \draw (dd) -- (d);
      \draw (ee) -- (e);
      \draw (ff) -- (f);
      \draw (a2a2) -- (a2);
      \draw (gg) -- (g);
      \draw (ee) -- (e);
  \end{tikzpicture}
  \hfill
    \begin{tikzpicture}
    \tikzstyle{every node}=[circle,inner sep=1pt, minimum size=4pt,draw, fill=blue!20,font=\footnotesize]
    \node (e) at (2,0) {};
    \node (f) at (3,0) {};
    \path (2,-.2) coordinate (bidon);

    \tikzstyle{every node}=[fill=white,inner sep=1pt,font=\footnotesize]
    \path (e) node[above right=6pt,yshift=4pt,rotate=-40]{\large $\leadsto$};
    \tikzstyle{every node}=[draw,circle,inner sep=1pt]
    \path (e)+(0,.7) node (ee){};
    \path (f)+(0,.7) node (ff){};
    \tikzstyle{every path}=[green,dash pattern=on 2pt off 1pt,thick]
    \draw (ee) -- node[draw=none,left=-1pt,yshift=2pt,font=\footnotesize]{$e_1$} (e);
    \draw (ff) -- node[draw=none,right,yshift=2pt,font=\footnotesize]{$e_2=e_1$} (f);
    \path (ee) node[draw=none,black,font=\scriptsize,above=1pt]{$x$};
    \path (f) node[draw=none,black,font=\scriptsize,right=2pt]{$x$};

  \end{tikzpicture}
  \hfill~\qed
\end{proof}

  \subsection{Two-path-free graphs}

  \begin{lemma}
    \label{lem:tpf-acyclic}
  Two-path-free graphs are acyclic.
\end{lemma}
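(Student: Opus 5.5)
We argue by contrapositive: assuming $\G$ has a temporal cycle, we exhibit two distinct temporal walks between some ordered pair of vertices. Let $C=\langle (e_1,t_1),\dots,(e_k,t_k)\rangle$ be a temporal cycle at a vertex $v$. Since walks are required to be simple (no repeated edge), and back-and-forth traversal of a single edge is excluded, we have $k\ge 2$ and the edges $e_1,\dots,e_k$ are pairwise distinct. The $t_i$ strictly increase, because $\G$ is proper.

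The simplest route uses the variant of two-path-freeness that allows $u=v$. In that case the empty walk from $v$ to $v$ and the cycle $C$ are two distinct temporal walks from $v$ to itself, and we are done. The paper leaves this convention unspecified, so I would not rely on it.

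Instead, I would produce two walks between distinct vertices as follows. Write $C$ as the vertex sequence $v=x_0,x_1,\dots,x_k=v$, and let $u=x_1$ be the endpoint of $e_1$ other than $v$. Then there are two walks from $v$ to $u$.
\begin{itemize}
\item[(a)] The single-edge walk $\langle(e_1,t_1)\rangle$.
\item[(b)] The whole cycle $C$ followed by $e_1$ again is not allowed, since $e_1$ would be repeated.
\end{itemize}
So (b) fails, and a better choice is the pair $(x_1, v)$, i.e., from $u$ to $v$. One walk is the suffix $\langle (e_2,t_2),\dots,(e_k,t_k)\rangle$ of $C$, going from $x_1$ to $v$. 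For a second walk, we need another route, and here lies the main obstacle: a single cycle may not directly yield two distinct walks between a distinct pair.

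The plan is to exploit the fact that $v$ has two distinct incident edges on $C$, namely $e_1$ and $e_k$ (distinct because $k\ge2$ and the edges are distinct), together with properness: $t_1<t_k$. The two walks I would compare are both from $x_1$ to $x_{k-1}$.
\begin{itemize}
\item[(1)] The walk $\langle (e_2,t_2),\dots,(e_{k-1},t_{k-1})\rangle$, which is the internal part of the cycle. If $k=2$, then $x_1=x_{k-1}$ and this walk is empty.
\item[(2)] The walk $\langle (e_1,t_1),(e_k,t_k)\rangle$, which goes $x_1\to v\to x_{k-1}$.
\end{itemize}
Walk (2) is not temporal, however, since it traverses $e_1$ at time $t_1$ and then $e_k$ at time $t_k$, but $e_1$ would be traversed backwards, and we would need $t_1$ to be at least the arrival time at $x_1$, which it is not in that order.

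The fix is to reverse the viewpoint and compare walks from $v$ to $x_{k-1}$, assuming $k\ge3$ so that $x_{k-1}\ne v$. One walk is the prefix $\langle(e_1,t_1),\dots,(e_{k-1},t_{k-1})\rangle$, and the other would need to use $e_k$ backwards, which is again not temporal. So both naive pairings fail.

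I therefore expect the honest resolution to be the convention $u=v$ allowed, giving the empty walk versus $C$, or the nonempty cycle versus the cycle extended around a second time. The latter is illegal because edges would repeat.

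My final plan is this: state that two-path-freeness is taken to forbid two walks from a vertex to itself. Then $C$ and the trivial walk at $v$ are two distinct walks from $v$ to $v$, so any cyclic graph is not two-path-free, which proves the contrapositive. I would also remark that with the distinct-vertex convention, one can combine this argument with \TC-type hypotheses later on, for instance by using a walk reaching $v$ from some $w$ and its extension by $C$.
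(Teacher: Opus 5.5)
\paragraph{Verdict.} Your argument for the convention where the endpoints may coincide ($C$ versus the empty walk at $v$) is correct, and it matches the first half of the paper's proof. The gap is the convention with distinct endpoints.

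The paper explicitly leaves this convention unspecified and intends the lemma to hold under both. Settling on the $u=v$ version by fiat therefore leaves the distinct-endpoint case unproved. Your suggestion that this case may need \TC-type hypotheses is also wrong.

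\paragraph{Where the attempt goes wrong.} The obstacle you hit comes from a false belief: that traversing an edge against the orientation of the cycle is ``not temporal''. Edges of the footprint are undirected, and temporality constrains only the sequence of labels along the walk.
\begin{itemize}
\item Your walk (2) goes through $x_1, v, x_{k-1}$ with labels $t_1<t_k$, so it is a valid temporal walk. It starts at $x_1$, so there is no earlier arrival time to respect.
\item Likewise, the single-edge walk $\langle (e_k,t_k)\rangle$ is a valid walk from $v$ to $x_{k-1}$. Paired with the prefix $\langle (e_1,t_1),\dots,(e_{k-1},t_{k-1})\rangle$, it already finishes the proof: $v\neq x_{k-1}$, and the prefix does not use $e_k$.
\end{itemize}

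\paragraph{The missing step.} This is the paper's argument, and your second attempt was one step away from it. From $u=x_1$ to $v$ you have two walks:
\begin{itemize}
\item the suffix $\langle (e_2,t_2),\dots,(e_k,t_k)\rangle$ of $C$;
\item the single-edge walk $\langle (e_1,t_1)\rangle$, read from $x_1$ to $x_0=v$.
\end{itemize}
These walks are distinct. The suffix is nonempty because $k\ge 2$, and it does not contain $e_1$ because walks do not repeat edges. Moreover $x_1\neq v$, since the footprint has no loops. So a temporal cycle yields two walks between distinct vertices, with no additional hypothesis.
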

\begin{proof}
  If the source and target vertices in the definition of two-path-free graphs are allowed to be the same, this is immediate, as the vertex having a temporal cycle also has an empty walk to itself. If both must be different, we can prove this as follows. By contradiction, suppose a two-path-free graph has a temporal cycle $v_0,v_1,\dots,v_k=v_0$, with $k\geq 2$. Then, $v_1$ can reach $v_0$ through the edge $v_0v_1$ and through the walk $v_1,\dots,v_k$.\qed
\end{proof}

Due to~\Cref{lem:tpf-acyclic}, two-path-free graphs inherit all the properties of acyclic graphs. In addition, they have the following properties.

\begin{lemma}
  \label{lem:tpf-acyclic}
  Two-path-free graphs are minimal.
\end{lemma}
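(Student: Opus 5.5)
Since the graph is happy, every edge $e=uv$ is itself a temporal walk from $u$ to $v$ (and from $v$ to $u$). The proof will show that removing an edge $e$ always destroys the reachability of one of its endpoints to the other. Fix an arbitrary edge $e=uv$ of a two-path-free graph $\G$. In $\G$ we have $u\leadsto v$, witnessed by the one-edge walk $\langle(e,\lambda(e))\rangle$. Two-path-freeness says this is the \emph{only} temporal walk from $u$ to $v$ in $\G$.

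Now consider $\G\setminus\{e\}$. Its temporal walks are exactly the temporal walks of $\G$ that avoid $e$. The unique walk from $u$ to $v$ in $\G$ uses $e$, so there is no walk from $u$ to $v$ in $\G\setminus\{e\}$. Hence $(u,v)$ is an arc of $\mathcal{R}(\G)$ but not of $\mathcal{R}(\G\setminus\{e\})$, so $\mathcal{R}(\G\setminus\{e\})\ne\mathcal{R}(\G)$. Since $e$ was arbitrary, $\G$ is minimal.

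The argument uses $u\neq v$, which always holds since footprints are simple graphs. So it is valid under either reading of the definition of two-path-freeness, whether or not source and target must be distinct. The only point needing care is the notion of ``walk'' in the paper: walks are required to be simple, i.e., not to repeat edges. This only shrinks the set of walks, so the single-edge walk is still a walk and uniqueness still excludes all others.

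There is no real obstacle: the statement is immediate from the definitions. Acyclicity (\Cref{lem:tpf-acyclic}) is not even needed.
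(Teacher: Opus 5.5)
Your proof is correct and is the paper's argument: the one-edge walk along $e=uv$ is the only temporal walk from $u$ to $v$, so deleting $e$ removes the arc $(u,v)$ from the reachability graph. The paper phrases this by contradiction (if $\mathcal{R}(\G\setminus e)=\mathcal{R}(\G)$, then $\G\setminus e$ would contain a second walk from $u$ to $v$), while you argue directly.
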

\begin{proof}
  Suppose that a graph $\G$ is two-path-free but not minimal. Thus, there exists an edge $e=uv$ such that $\mathcal{R}(\G\setminus e) = \mathcal{R}(\G)$. The edge $uv$ is a temporal walk from $u$ to $v$ in $\G$, and the previous condition implies that $\G\setminus e$ has another temporal walk from $u$ to $v$. Thus, there are at least two temporal walks from $u$ to $v$ in $\G$, a contradiction.\qed
\end{proof}

\begin{lemma}
  \label{lem:upper-bound-tpf}
  Two-path-free graphs have at most $O(n \log n)$ edges.
\end{lemma}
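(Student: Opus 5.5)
The plan is a potential argument that processes the edges in \emph{decreasing} order of labels and tracks, for each vertex, the set of vertices it can reach using only the edges processed so far. Fix a two-path-free graph $\G$. For a threshold $t$, let $R_t(x)$ be the set of vertices reachable from $x$ by temporal walks that use only edges with labels $\ge t$, including $x$ itself through the empty walk. We add the edges one at a time from the largest label down to the smallest. Edges sharing a label are pairwise non-adjacent because $\G$ is proper, so no walk uses two of them consecutively. Adding them one by one in arbitrary order therefore gives the same sets as adding them simultaneously.

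\textbf{Update rule.} Suppose we add an edge $uv$ whose label $t$ is smaller than every label seen so far, and let $R(\cdot)$ be the current sets. Any walk from $u$ that uses $uv$ must use it first. Hence the new set of $u$ is $R(u)\cup R(v)$, the new set of $v$ is $R(v)\cup R(u)$, and every other set is unchanged.

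\textbf{Key step: disjointness.} We need $R(u)\cap R(v)=\emptyset$ just before $uv$ is added.
\begin{itemize}
\item If $u\in R(v)$, then $v$ reaches $u$ through later edges. Prepending $uv$ gives a temporal cycle at $u$, which contradicts \Cref{lem:tpf-acyclic}. The case $v\in R(u)$ is symmetric.
\item If some $w\notin\{u,v\}$ lies in both sets, then $u$ reaches $w$ by a walk $W_1$ avoiding $uv$, since that walk uses only later edges. Also $u$ reaches $w$ by $uv$ followed by a walk from $v$. These are two distinct temporal walks from $u$ to $w$, contradicting two-path-freeness.
\end{itemize}
Thus both $u$ and $v$ end up with sets of size exactly $a+b$, where $a=|R(u)|\ge 1$ and $b=|R(v)|\ge 1$.

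\textbf{Potential and conclusion.} Define $\Phi=\sum_{x\in V}\log_2|R(x)|$. Initially every set is a singleton, so $\Phi=0$. At the end every set has size at most $n$, so $\Phi\le n\log_2 n$. Adding one edge increases $\Phi$ by
\[ 2\log_2(a+b)-\log_2 a-\log_2 b=\log_2\frac{(a+b)^2}{ab}\ge \log_2 4=2. \]
Hence $2m\le n\log_2 n$, that is, $m\le \frac{1}{2}n\log_2 n=O(n\log n)$.

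\textbf{Main obstacle.} The delicate point is the disjointness claim. One must check that the walk $W_1$ genuinely differs from the walk starting with $uv$, which holds because $W_1$ uses only labels above $t$. One must also handle a common vertex equal to $u$ or $v$, which is exactly where acyclicity (\Cref{lem:tpf-acyclic}) is invoked. If the definition requires distinct endpoints, the argument still applies, because in the second case $w\ne u$.
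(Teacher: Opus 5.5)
Your proof is correct. It rests on the same idea as the paper's: when an edge $uv$ is processed, the reachability sets of $u$ and $v$ (restricted to the edges processed so far) are disjoint, so both endpoints move to a disjoint union whose size is capped by $n$. You run this backward in time with successor sets, whereas the paper runs it forward with predecessor sets. Your potential $\sum_x \log_2|R(x)|$ also counts more tightly than the paper's per-vertex doubling argument, which only gives $m\le n\log_2 n$; yours gives $m\le \frac12 n\log_2 n$, and this is attained exactly by hypercubes with dimension-$i$ edges labeled $i$.
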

\begin{proof}
  Let us call the vertices that can reach some vertex $v$ its \emph{predecessors} (initially, itself). The two-path-free property implies that for each edge $uv$, the predecessors of $u$ and $v$ before $\lambda(uv)$ are disjoint (otherwise, the vertices in the intersection can reach $u$ using $uv$ or not using $uv$, using two different walks). Thus, whenever an edge $uv$ occurs, one of $u$ or $v$ at least doubles its number of predecessors. If the number of edges is $\omega(n \log n)$, this doubling event will occur $\omega(\log n)$ times for at least one vertex, contradicting disjointness.\qed
\end{proof}

Observe that this bound is asymptotically tight, as certain happy labelings of hypercubes are two-path-free with $\Theta(n\log n)$ edges.
  
\subsection{Two-path-free \TC graphs}

Due to~\Cref{lem:tpf-acyclic}, two-path-free \TC graphs inherit all the properties of acyclic \TC graphs (as well as the general properties of two-path-free graphs, such as being minimal). In addition, Akos Seress~\cite{seress} characterizes a tight lower bound on the number of edges that a two-path-free graph can have when it is \TC. This is the main result of~\cite{seress}, whose proof is far from immediate, we do not reproduce it.
  
\begin{lemma}[\cite{seress}]
  Two-path-free \TC graphs have at least $2.25n -6$ edges.
\end{lemma}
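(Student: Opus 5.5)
The plan is to follow the classical gossip strategy: split the edges into a \emph{forward} (gathering) part and a \emph{backward} (broadcasting) part, and show that two-path-freeness forbids these parts from sharing as many edges as in ordinary gossip. For each vertex $v$ and time $t$, let $P_t(v)$ be the set of predecessors of $v$ using edges with label at most $t$, and define the successor sets $S_t(v)$ symmetrically, backwards in time. As in the proof of Lemma~\ref{lem:upper-bound-tpf}, two-path-freeness means an edge $uv$ at time $t$ acts on disjoint sets $P_{t-1}(u)$ and $P_{t-1}(v)$, and replaces both by their disjoint union. So the predecessor sets evolve as a merging process, a family of binary merge trees, and the successor sets evolve as the time-reversed merging process. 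Since $\G$ is \TC, every $P_L(v)$ and every $S_1(v)$ equals $V$.

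First, by Lemma~\ref{lem:extremal-matching} and Lemma~\ref{lem:tpf-acyclic}, $E^-$ and $E^+$ are two disjoint perfect matchings for $n>2$, which gives $n$ edges immediately. The standard gossip argument then yields about $2n-4$ edges. Take the first time $t_0$ at which some vertex has $P_{t_0}(v)=V$. Building any predecessor set equal to $V$ by disjoint merges needs at least $n-1$ edges, and symmetrically for successor sets. The overlap between the edges used for gathering and those used for broadcasting is at most a constant, because around $t_0$ only a few edges can be useful in both directions.

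The real content is the extra $n/4$, and I expect this to be the main obstacle. The first attempt will be a \emph{charging} (discharging) argument on local configurations. Disjointness says that when $uv$ occurs, $P(u)\cap P(v)=\emptyset$ and $S(u)\cap S(v)=\emptyset$. This should forbid an edge from being simultaneously a cheap merge for gathering and for broadcasting unless the surrounding vertices have degree at least $3$. Concretely:
\begin{itemize}
\item Classify vertices by footprint degree. A degree-$2$ vertex $v$ has only $e^-(v)$ and $e^+(v)$.
\item Show via acyclicity (no temporal cycle through $v$) and the perfect matching property that degree-$2$ vertices cannot cluster. Each small group of them, say of size at most $4$, should force a neighbour of degree at least $3$ that is not charged by other groups.
\item Summing $\sum_v \deg(v)=2m$ with these charges would then give $m\ge \tfrac{9}{8}\cdot 2n - O(1)=2.25n-O(1)$. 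The constant $6$ would come from the boundary vertices near the pivot moment $t_0$, and from the two ends of the process (labels $1$ and $L$).
\end{itemize}

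If the local charging fails, the fallback is induction on $n$, as in Seress's original approach. Pick an edge of $E^-$, contract it or delete a pair of vertices. Check that the result stays happy, compressed, two-path-free and \TC. Then show that at least $9/2$ edges disappear per two vertices removed, on average over a suitable sequence of reductions. The delicate part is identifying reducible configurations whose removal preserves two-path-freeness, using Lemma~\ref{lem:basic-reachability} to reroute walks. I would also check tightness on the known extremal constructions to calibrate the constant $-6$.
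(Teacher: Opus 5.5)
The paper does not prove this statement. It cites Seress and notes that the argument is far from immediate. So your sketch has to stand on its own, and it does not. The only steps you actually justify are the easy ones. First, $E^-$ and $E^+$ are disjoint perfect matchings, so $m\ge n$. Second, a gathering tree into $v$ at time $t_0$ has $n-1$ distinct edges, because predecessor sets merge disjointly. The $2n-O(1)$ baseline already rests on an unargued claim: that gathering and broadcasting edges overlap in $O(1)$ edges ``because around $t_0$ only a few edges can be useful in both directions''. You never pin down which broadcasting structure you compare against. Controlling exactly this overlap is where the work of the classical $2n-4$ gossip bound lies. You could cite that theorem instead, but it only gives the baseline.

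The real gap is the extra $n/4$, and your proposed mechanism cannot produce it. Through $\sum_v \deg(v)=2m$, the target $m\ge 2.25n-O(1)$ means average degree at least $4.5$. Showing that degree-$2$ vertices do not cluster, and that each small group of them forces an uncharged neighbour of degree at least $3$, can only certify an average degree of about $3$. That gives $m$ around $1.5n$, weaker than the baseline, so the factor $\frac{9}{8}\cdot 2n$ does not follow from such charges. The surplus must be counted on top of the $2n-O(1)$ tree edges. You would have to exhibit edges lying outside both the gathering and the broadcasting structures, and show that two-path-freeness forces roughly one of them per four vertices. That is precisely the nontrivial part of the statement, and it is absent here.

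The induction fallback is not a proof either. No reducible configuration is identified, and the suggested reduction is unsafe. Contracting $uv\in E^-$, where $uv=e^-(u)=e^-(v)$ precedes all other edges at $u$ and $v$, can put two equal labels on the merged vertex, which breaks properness. It also creates walks that enter through an edge of $u$ and leave through a later edge of $v$. Such walks were impossible in the original graph and can duplicate existing walks, which breaks two-path-freeness.
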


Regarding lifetime, Douglas West presents in~\cite{west_disjoint} an infinite family of two-path-free \TC graphs whose lifetime $L$ is $n/4+1$. Based on experiments, we conjecture that this lifetime is the maximum possible lifetime of a two-path-free \TC graph.

\begin{conjecture}
  Two-path-free \TC graphs have lifetime $L(\G)\le n/4+1$.
\end{conjecture}

If true, this would imply a nice sequence of lifetime bounds, of at most $n-1$ for acyclic graphs, $n/2$ for acyclic \TC graphs, and $n/4+1$ for two-path-free \TC graphs.

\subsection{\TC graphs}
\label{sec:decomposition}

Let us start with a folklore result from gossip theory.

\begin{lemma}
  If $\G$ is \TC, then $L(\G) \ge \log_2 n$
\end{lemma}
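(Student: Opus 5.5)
The plan is a doubling argument on the set of vertices that can reach a fixed vertex. Fix a vertex $v$. For each time $t\in\{0,1,\dots,L\}$, let $P_t(v)$ be the set of vertices $u$ that have a temporal walk to $v$ using only edges with labels at most $t$. We set $P_0(v)=\{v\}$ by convention, counting the empty walk. Since $\G$ is \TC, every vertex reaches $v$, so $P_L(v)=V$ and $|P_L(v)|=n$. It will be cleaner to argue forward in time with the analogous sets for all vertices simultaneously. For each vertex $x$ and time $t$, let $Q_t(x)$ be the set of vertices that can reach $x$ by a temporal walk whose labels are all at most $t$.

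The key step is this claim: for every $x$ and every $t\ge 1$, $|Q_t(x)|\le 2\max_y |Q_{t-1}(y)|$. To prove it, note that because the graph is happy, in particular proper, at most one edge incident to $x$ carries label $t$.
\begin{itemize}
\item If no edge at $x$ has label $t$, then $Q_t(x)=Q_{t-1}(x)$.
\item If $xy$ is the edge with label $t$, then any walk reaching $x$ with labels $\le t$ either avoids label $t$, or has last edge $xy$ at time $t$.
\end{itemize}
In the second case the walk up to $y$ uses labels $\le t-1$, because labels strictly increase along walks in proper graphs. Hence $Q_t(x)\subseteq Q_{t-1}(x)\cup Q_{t-1}(y)$. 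The same reasoning applies to walks ending with the edge traversed from $y$, so no other case arises.

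By induction on $t$, starting from $|Q_0(x)|=1$, this gives $|Q_t(x)|\le 2^t$ for all $x$ and $t$. Since $\G$ is \TC, $Q_L(x)=V$, so $n\le 2^L$ and therefore $L\ge \log_2 n$.

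I expect no serious obstacle. The only point needing care is the convention on strictness and the exclusion of back-and-forth walks. Properness guarantees that labels strictly increase along any temporal walk, and that a single label $t$ appears at most once around $x$, so the union bound with exactly two terms is valid. I would also remark that the argument does not use compression, and that it holds for proper graphs in general.
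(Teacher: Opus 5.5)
Your proof is correct and uses the same doubling idea as the paper's: properness makes each snapshot a matching, so reachability sets at most double per time step, which gives $n \le 2^L$. The only difference is direction. You track the predecessor sets $Q_t(x)$, which forces your induction to carry the bound $2^t$ uniformly over all vertices, since $Q_t(x)$ is built from $Q_{t-1}(y)$ for a different vertex $y$. The paper instead tracks the forward set reached from a single fixed vertex, and that set doubles directly without any uniform hypothesis.
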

\begin{proof}
  If we see the graph as a sequence of snapshot $G_1, G_2, \dots, G_L$, then the fact that the graph is proper implies that the edges of each snapshot form a matching. Thus, at each time~$t$, the set of nodes that were reached from any fixed vertex can at most double, which implies $L \ge \lceil \log_2 n\rceil$ for \TC. (Note that certain labelings of hypercubes achieve \TC and two-path-freeness with exactly $L = \log_2 n$, so this bound is tight.)\qed
\end{proof}


The remaining content of this section is mostly adapted from the work of Roger Labahn~\cite{labahn93,labahn95}. As a warm-up, note that many papers in the gossip literature analyse the properties of temporal reachability from the point of view of partial orders (posets). It is not mandatory for the reader to be perfectly confident with the notions defined in the next paragraph. We present these notions more as a guide for readers who may want to explore further the gossip literature. The subsequent lemmas are formulated in ways that enable their use without requiring the definition of posets.

Given a graph $\G$, the reachability between the edges of $\G$ naturally defines a partial order such that $e_1 \preceq e_2$ if and only if $e_1 \leadsto e_2$. This is illustrated in a few steps in~\Cref{fig:poset}. From a given graph $\G$, one can look at the local ordering of the edges (or time edges, if the graph is not simple). The line graph of $\G$ can then be oriented according to these local orderings. The resulting graph is a directed acyclic graph (DAG), which contains the same information as this poset. It is often convenient to simplify the poset by focusing on its transitive reduction $\lessdot$, called the \emph{covering relation} of $\preceq$, where $e_1 \lessdot e_2$ if and only if $e_1\ne e_2, e_1 \preceq e_2$ and there is no other $e$ such that $e_1 \preceq e \preceq e_2$. In our context, $e_1 \lessdot e_2$ corresponds to the fact that $e_1$ and $e_2$ are incident to a common vertex $v$ and no other edges $e$ incident to $v$ have an intermediate label between $\lambda(e_1)$ and $\lambda(e_2)$ (these edges are called immediate neighbors in the poset). In particular, the in-degree and out-degree of an element in the covering relation is at most $2$ (at most one for each endpoint of the corresponding edge).

\begin{figure}[h]
  \centering
    \footnotesize
  \begin{tabular}{cccc}
  \begin{tikzpicture}[scale=1.4]
    \tikzstyle{every node}=[defnode,black,minimum size=8pt,font=\scriptsize]
    \path (0,0) node (a) {$a$};
    \path (1,0) node (b) {$b$};
    \path (0,1) node (c) {$c$};
    \path (1,1) node (d) {$d$};
    \path (.5,2) node (e) {$e$};
    \tikzstyle{every node}=[inner sep=2pt,font=\scriptsize]
    \draw (a) -- coordinate (ab) node[below] {4}(b);
    \draw (a) -- node[left] {3}(c);
    \draw (b) -- node[right] {2}(d);
    \draw (c) -- node[above=-1pt] {1}(d);
    \draw (c) -- node[above left] {2}(e);
    \draw (d) -- node[above right] {3}(e);    
  \end{tikzpicture}
    &
  \begin{tikzpicture}[scale=1.4]
    \tikzstyle{every node}=[defnode,minimum size=8pt,font=\scriptsize,fill=white]
    \path (0,0) node (a) {$a$};
    \path (1,0) node (b) {$b$};
    \path (0,1) node (c) {$c$};
    \path (1,1) node (d) {$d$};
    \path (.5,2) node (e) {$e$};
    \tikzstyle{every node}=[inner sep=2pt,font=\scriptsize]
    \draw[gray] (a) -- node[defnode,fill=blue!50,inner sep=1.3pt] (ab) {} node[below] {4}(b);
    \draw[gray] (a) -- node[defnode,fill=blue!50,inner sep=1.3pt] (ac) {} node[left] {3}(c);
    \draw[gray] (b) -- node[defnode,fill=blue!50,inner sep=1.3pt] (bd) {} node[right] {2}(d);
    \draw[gray] (c) -- node[defnode,fill=blue!50,inner sep=1.3pt] (cd) {} node[above=-1pt] {1}(d);
    \draw[gray] (c) -- node[defnode,fill=blue!50,inner sep=1.3pt] (ce) {} node[above left] {2}(e);
    \draw[gray] (d) -- node[defnode,fill=blue!50,inner sep=1.3pt] (de) {} node[above right] {3}(e);

    \tikzstyle{every path}=[shorten >= 2pt,shorten <= 2pt]
    \draw[->] (ac) -- (ab);
    \draw[->] (bd) -- (ab);
    \draw[->] (cd) -- (ac);
    \draw[->] (ce) -- (ac);
    \draw[->] (bd) -- (de);
    \draw[->] (cd) -- (de);
    \draw[->] (ce) -- (de);
    \draw[->] (cd) -- (bd);
    \draw[->] (cd) -- (ce);    
  \end{tikzpicture}
      &
  \begin{tikzpicture}[xscale=1.3]
    \tikzstyle{every node}=[draw,rectangle,rounded corners=3pt,inner sep=2pt,font=\tiny]
    \path (0.5,1) node (1){$cd,1$};
    \path (0,2) node (2a){$ce,2$};
    \path (1,2) node (2b){$bd,2$};
    \path (0,3) node (3a){$ac,3$};
    \path (.6,3) node (3b){$ed,3$};
    \path (1,4) node (4){$ab,4$};

    \tikzstyle{every path}=[shorten >= 2pt,shorten <= 2pt]
    \draw[->] (1) -- (2a);
    \draw[->] (1) -- (2b);
    \draw[->] (1) -- (3a);
    \draw[->] (1) -- (3b);
    \draw[->] (2a) -- (3a);
    \draw[->] (2a) -- (3b);
    \draw[->] (2b) -- (3b);
    \draw[->] (2b) -- (4);
    \draw[->] (3a) -- (4);
  \end{tikzpicture}
        &
  \begin{tikzpicture}[xscale=1.3]
    \tikzstyle{every node}=[draw,rectangle,rounded corners=3pt,inner sep=2pt,font=\tiny]
    \path (0.5,1) node (1){$cd,1$};
    \path (0,2) node (2a){$ce,2$};
    \path (1,2) node (2b){$bd,2$};
    \path (0,3) node (3a){$ac,3$};
    \path (.6,3) node (3b){$ed,3$};
    \path (1,4) node (4){$ab,4$};

    \tikzstyle{every path}=[shorten >= 2pt,shorten <= 2pt]
    \draw[->] (1) -- (2a);
    \draw[->] (1) -- (2b);
    \draw[->] (2a) -- (3a);
    \draw[->] (2a) -- (3b);
    \draw[->] (2b) -- (3b);
    \draw[->] (2b) -- (4);
    \draw[->] (3a) -- (4);
  \end{tikzpicture}\\
    Graph & Line graph (DAG) & Poset view & Covering relation
  \end{tabular}
\caption{\label{fig:poset} Partial order induced by the reachability among edges.}
\end{figure}

In the following, we try to reformulate as much as possible the relevant properties in terms of the original temporal graph, refering to the poset view informally. The important aspect is that the analysis considers mostly reachability between edges rather than vertices. We invite the reader to check~\Cref{lem:basic-reachability} whenever a doubt arises about how reachability among edges transposes to reachability among vertices.

\begin{definition}[Divergent edge] A divergent edge $e$ is a maximal source edge, in the sense that there exists no other source edge $e'$ such that $e\leadsto e'$.
\end{definition}

Note that the set of divergent edges is nonempty, as the source edges contain at least $E^-$ and divergent edges are maximal source edges.
Informally, divergent edges are the last edges that can still reach all the vertices, or equivalently, reach all the edges in $E^+$. Divergent edges are incomparable with each other, by definition. This implies that they also form a matching in the footprint of $\G$. (The situation would be more complex in non-proper graphs, especially in the non-strict setting. How to define analogous concepts in general temporal graphs is an interesting question, not explored here.)

It may appear tempting to define the notion of convergent edges in a symmetrical way (i.e., as minimal sink edges). This was done for example by Krumme~\cite{krumme}. However, we would then miss a nice property of the subsequent decomposition, namely, the fact that all divergent edges can reach all convergent edges (we found some graphs that witness this fact). To prevent this issue, Labahn~\cite{labahn93} defines a more restricted (and asymmetric) concept of convergent edges which we reformulate in two steps as follows.

\begin{definition}[Div-sink edge] A div-sink edge is a sink edge that can be reached by all divergent edges.
\end{definition}

Note that the set of div-sink edges is nonempty, as it contains at least $E^+$. We can now define convergent edges as follows.

\begin{definition}[Convergent edge] A convergent edge $e$ is a minimal div-sink edge, in the sense that there exist no other div-sink edge $e'$ such that $e' \leadsto e$.
\end{definition}

These definitions are summarized in~\Cref{fig:decomposition}.
\begin{figure}[h]
  \centering
  \begin{tikzpicture}[xscale=.75,yscale=.8]
    \fill[cyan!20] plot[smooth] coordinates {
      (0,0)
      (1.5,2)
      (2.5,1)
      (4,2.5)
      (6,0)
    }-- (0,0);
    \fill[magenta!10] plot[smooth] coordinates {
      (0,5)
      (1,3)
      (2,3.5)
      (3,2.5)
      (4,4)
      (5,3)
      (6,5)
    };
    \fill[magenta!22] plot[smooth] coordinates {
      (0,5)
      (1,3.7)
      (2,4.2)
      (3,3.2)
      (4,4.5)
      (5,3.7)
      (6,5)
    };

    \tikzstyle{every node}=[defnode,green,minimum size=2.5pt]
    \path (.5,0.1) node (m1){};
    \path (1.5,0.3) node (m2){};
    \path (2.5,0.1) node (m3){};
    \path (3.5,0.3) node (m4){};
    \path (4.5,0.5) node (m5){};
    \path (5.5,0.1) node (m6){};

    \tikzstyle{every node}=[defnode,red,minimum size=2.5pt]
    \path (.5,4.9) node (M1){};
    \path (1.5,4.7) node (M2){};
    \path (2.5,4.7) node (M3){};
    \path (3.5,4.9) node (M4){};
    \path (4.5,4.9) node (M5){};
    \path (5.5,4.9) node (M6){};

    \tikzstyle{every node}=[defnode,fill=cyan,minimum size=5pt]
    \path (1.55,1.8) node (d1){};
    \path (3.93,2.3) node (d2){};

    \tikzstyle{every node}=[defnode,fill=magenta!70,minimum size=5pt]
    \path (1.05,3.85) node (c1){};
    \path (2.95,3.38) node (c2){};
    \path (4.98,3.85) node (c3){};
   
    \tikzstyle{every node}=[font=\footnotesize]
    \path (0,1) node[left,cyan!50]{Sources};
    \path (6,0) node[below left,green!70]{Minimum edges ($E^-$)};
    \path (6,5) node[above left,red!70]{Maximum edges ($E^+$)};
    \path (0,3.5) node[left,magenta!30]{Sinks};
    \path (0,4.5) node[left,magenta!50]{Div-sinks};
    \draw[thin,gray] (c3) -- (5.8,3.85) node[right,magenta]{Convergent};
    \draw[thin,gray] (d2) -- (5,2.3) node[right,cyan]{Divergent};
  \end{tikzpicture}
  \caption{\label{fig:decomposition} The structure of temporal connectivity (poset view).}
\end{figure}
Once again, observe that convergent edges are incomparable by definition, thus they also form a matching. Let us commit these observations for later use into the following lemma.

\begin{lemma}
  \label{lem:div-conv-matching} Divergent edges are incomparable and form a matching in~$\G$. The same holds for convergent edges.
\end{lemma}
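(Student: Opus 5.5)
The plan is to prove the two claims (incomparability, then the matching property) separately for divergent edges, and then repeat the argument verbatim for convergent edges.

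\emph{Incomparability.} Let $e_1 \ne e_2$ be two divergent edges and suppose, for contradiction, that $e_1 \leadsto e_2$. Since $e_2$ is divergent, it is in particular a source edge. So $e_1$ reaches another source edge, contradicting the maximality of $e_1$ in the definition of divergent edges. By symmetry $e_2 \not\leadsto e_1$ as well, so the two edges are incomparable.

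The convergent case is the mirror image. Let $e_1,e_2$ be distinct convergent edges with $e_1\leadsto e_2$. Both are div-sink edges, so $e_1$ is a div-sink edge that reaches $e_2$. This contradicts the minimality of $e_2$.

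\emph{Matching.} This is the step that uses the happy assumption. We show that two incomparable distinct edges $e_1,e_2$ cannot share an endpoint. Suppose they share a vertex $v$. Since $\G$ is proper, $\lambda(e_1)\ne\lambda(e_2)$; wlog $\lambda(e_1)<\lambda(e_2)$. Then the two-edge walk that traverses $e_1$ into $v$ and continues along $e_2$ is a temporal walk, so $e_1\leadsto e_2$. This contradicts incomparability. Hence any set of pairwise incomparable edges is a matching in the footprint, and we apply this to the divergent edges and to the convergent edges.

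The only delicate point is that the walk above must be oriented so that it enters $v$ through $e_1$ and leaves through $e_2$. This is always possible because $v$ is the common endpoint, so the walk starts at the other endpoint of $e_1$. I expect no real obstacle beyond this.
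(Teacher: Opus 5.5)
Your proposal is correct and follows the paper's own reasoning. The paper gives no formal proof; it just remarks that incomparability holds by definition (maximality of divergent edges among source edges, minimality of convergent edges among div-sink edges), and that incomparable edges must form a matching because properness makes any two adjacent edges comparable through their shared vertex. Your write-up spells out exactly these two steps.
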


Using the above definitions, \TC graphs admit the following characterization:
\begin{lemma} [summarizing from~\cite{labahn93}]
  \label{lem:phases}A graph is \TC if and only if:~\\
  (i) Each vertex can reach at least one divergent edge.\\
  (ii) All divergent edges can reach all convergent edges.\\
  (iii) Each vertex can be reached by at least one convergent edge.
\end{lemma}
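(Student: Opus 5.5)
For the ``only if'' direction, assume $\G\in\TC$ and handle the three items separately using \Cref{lem:basic-reachability}.

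\emph{(i)} Let $v\in V$. By \Cref{lem:basic-reachability}~(v), $e^-(v)$ reaches every vertex, so $e^-(v)$ is a source edge. The source edges are finite and partially ordered by $\leadsto$ (transitive by (i), antisymmetric by (vi)), so some divergent edge $d$ satisfies $e^-(v)\leadsto d$ or $d=e^-(v)$. Trivially $v\leadsto e^-(v)$, and \Cref{lem:basic-reachability}~(ii) then gives $v\leadsto d$.

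\emph{(iii)} This is symmetric, with one extra check. Every $e^+(v)$ is a sink edge, since all $u$ reach it by (v). It is also reached by every divergent edge, because a divergent edge is a source edge and so reaches $v$, hence reaches $e^+(v)$ by (iv). So $e^+(v)$ is a div-sink edge. Taking a minimal div-sink edge below it gives a convergent edge $c$ with $c\leadsto e^+(v)$ or $c=e^+(v)$. We have $e^+(v)\leadsto v$, so $c\leadsto v$.

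\emph{(ii)} This is immediate, since a convergent edge is a div-sink edge and is therefore reached by all divergent edges by definition. I have to be careful about the convention for $d\leadsto d$ if a divergent edge is itself convergent. Divergent edges are source edges and convergent edges are sink edges, so such an edge is a pivot edge, and I would read ``reach'' reflexively here.

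For the ``if'' direction, take $u,v\in V$. By (i) there is a divergent edge $d$ with $u\leadsto d$. By (iii) there is a convergent edge $c$ with $c\leadsto v$. By (ii), $d\leadsto c$ (or $d=c$). Chaining with \Cref{lem:basic-reachability}~(ii) gives $u\leadsto c$, and (iii) of that lemma then gives $u\leadsto v$, so $\G\in\TC$.

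The main obstacle is minor. It is the degenerate case $d=c$ together with the edge-to-edge composition conventions, where the walks $u\leadsto d$ and $d\leadsto v$ must be glued by \Cref{lem:basic-reachability}~(iii). The other point to state explicitly is that a maximal element exists above any source edge, which follows from finiteness and the poset structure.
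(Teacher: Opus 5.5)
The paper states this lemma without proof (it only attributes it to Labahn), so there is no argument to compare against. Your proof is correct and is the natural direct one from the definitions and \Cref{lem:basic-reachability}. It implicitly assumes $n\ge 2$, so that $e^-(v)$ exists.

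The reflexive reading of edge-to-edge reachability is needed in (iii) as well as in (ii). In (iii) you claim that $e^+(v)$ is reached by every divergent edge $d$, and when $d=e^+(v)$ this requires $d\leadsto d$. This is not pedantic. Take the triangle with $\lambda(ab)=1$, $\lambda(bc)=2$, $\lambda(ca)=3$, which is happy, compressed and \TC. Its unique divergent edge is $bc=e^+(b)$. Under the strict reading, $bc$ is not a div-sink edge, the only convergent edge is $ca$, and $ca\not\leadsto b$, so the statement fails. You should therefore fix the reflexive convention globally, not only for the case $d=c$ in (ii). The paper does the same tacitly when it asserts that every edge of $E^+$ is a div-sink edge.

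Also, ``$d\leadsto v$ implies $d\leadsto e^+(v)$'' is the edge analogue of \Cref{lem:basic-reachability}~(iv), not that item itself. The same append-the-last-edge argument proves it, with the only degenerate case being $d=e^+(v)$ again. Likewise, ``$c\leadsto e^+(v)\leadsto v$ implies $c\leadsto v$'' is an edge--edge--vertex composition not listed in that lemma. The same truncation argument handles it.
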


Labahn~\cite{labahn93} then proceeds by defining the concepts of kernel and inner kernel as follows.

\begin{definition}[Kernel] The kernel of $\G$ consists of all the edges that both (1) can be reached by at least one divergent edge, and (2) can reach at least one convergent edge. It includes, in particular, the divergent edges and the convergent edges themselves.
\end{definition}

\begin{definition}[Inner kernel] The inner kernel of $\G$ is the part of the kernel that excludes divergent and convergent edges.
\end{definition}

Observe that the three sets corresponding respectively to source edges, div-sink edges, and inner kernel edges in a \TC graph must be disjoint. Yet, they do not form a full partition of $E$, as there may exist edges that are in neither of these sets.

\subsection{Minimal \TC graphs}
\label{sec:minimal-TC}
\noindent

If a \TC graph is minimal, then it must satisfy additional properties.
Lemma~\ref{lem:phases} implies that source edges, div-sink edges, and inner kernel edges are the only ones required for temporal connectivity. All the other edges can be removed without breaking \TC. This yield a nice decomposition (indeed, a partitioning) of the edges of minimal \TC graphs as follows.

\begin{lemma}[adapting from~\cite{labahn93}]
  \label{lem:decomposition}
  In minimal \TC graphs, the edges can be partitioned as:\\
  (i) Source edges (including divergent edges)\hfill (lower part)\\
  (ii) Inner kernel\hfill (middle part)\\
  (iii) Div-sink edges (including convergent edges)\hfill (upper part)
\end{lemma}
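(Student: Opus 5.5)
The proof has two parts: the three sets are pairwise disjoint, and their union is $E$. The union part uses minimality through the characterization of \Cref{lem:phases}.

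\emph{Disjointness.} Let $\mathcal{S}$, $\mathcal{K}$ and $\mathcal{D}$ denote the sets of source edges, inner kernel edges and div-sink edges, respectively.

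\begin{itemize}
\item $\mathcal{S}\cap\mathcal{D}=\emptyset$. A source edge that is also a div-sink edge is reached by every divergent edge $d$. Suppose $e \ne d$. Since $e$ is a source edge and $d\leadsto e$, maximality of the divergent edge $d$ is contradicted. Otherwise $e$ is itself divergent. Then every other divergent edge $d'$ satisfies $d'\leadsto e$, which contradicts incomparability (\Cref{lem:div-conv-matching}). Hence $e$ must be the unique divergent edge. In that case I will argue that $e$ is a pivot edge. For $n>2$ one checks that this is impossible: such an $e$ lies in both $E^-$- and $E^+$-type positions, and reaching every vertex while being reached by every vertex forces a cycle only in acyclic graphs. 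So this degenerate case needs care, possibly via minimality. I expect this to be the main obstacle, and I will check it first on small examples.
\item $\mathcal{K}\cap\mathcal{S}=\emptyset$. An inner kernel edge $e$ is reachable from some divergent edge $d\ne e$. If $e$ were a source edge, this would contradict the maximality of $d$.
\item $\mathcal{K}\cap\mathcal{D}=\emptyset$. Symmetrically, an inner kernel edge $e$ reaches some convergent edge $c\ne e$. A div-sink edge reaching $c$ contradicts the minimality of $c$ among div-sink edges.
\end{itemize}

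\emph{Covering.} Let $\G'$ be the subgraph keeping only the edges of $\mathcal{S}\cup\mathcal{K}\cup\mathcal{D}$. I will show that $\G'$ still satisfies conditions (i)–(iii) of \Cref{lem:phases}, with the same divergent and convergent edges.

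\begin{itemize}
\item Condition (i). For a vertex $v$ and the walk $W$ witnessing $v\leadsto d$, every edge of $W$ before $d$ reaches $d$. By \Cref{lem:basic-reachability}~(ii), each such edge is therefore a source edge.
\item Condition (iii). Symmetrically, every edge after $c$ on a walk from $c$ to $v$ is reached by $c$, and hence by all divergent edges (via (ii) and transitivity). It is also a sink edge, because every vertex reaches some divergent edge, which reaches $c$, which reaches the edge. So it is a div-sink edge.
\item Condition (ii). Every edge on a walk from $d$ to $c$ is reached by $d$ and reaches $c$, so it lies in the kernel.
\end{itemize}

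In $\G'$ these walks survive. The divergent and convergent edges remain source and div-sink edges there, which is enough to witness \TC via the sufficiency direction of \Cref{lem:phases}. Alternatively, I can directly compose walks: vertex $\leadsto d\leadsto c\leadsto$ vertex, using \Cref{lem:basic-reachability}. This direct composition is cleaner and avoids recomputing divergent edges in $\G'$, so I will use it.

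Finally, suppose some edge lies outside the union. Removing it preserves \TC, which contradicts minimality, since removing any edge of a minimal \TC graph changes the reachability and hence breaks \TC. Note that $\G'\in\TC$ implies $\mathcal{R}(\G')=\mathcal{R}(\G)$, since both are complete.
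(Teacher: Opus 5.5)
The covering half of your proof is correct and follows the paper's own route. By \Cref{lem:phases}, only source, inner-kernel and div-sink edges are needed for \TC, so minimality excludes all other edges. Your check that every edge of the witnessing walks falls into one of the three classes is exactly the detail the paper leaves implicit. Your arguments for $\mathcal{K}\cap\mathcal{S}=\emptyset$ and $\mathcal{K}\cap\mathcal{D}=\emptyset$ are also fine.

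The gap is $\mathcal{S}\cap\mathcal{D}$. You correctly reduce it to the case where $e$ is the \emph{unique} divergent edge (hence a pivot), but your plan to exclude this for $n>2$ cannot work. The argument ``a pivot forces a temporal cycle'' is the non-pivotability of \emph{acyclic} \TC graphs, and minimal \TC graphs need not be acyclic. The case really occurs. Take the triangle with $\lambda(ab)=1$, $\lambda(bc)=2$, $\lambda(ca)=3$.
It is happy, compressed, \TC, and minimal: deleting any edge leaves a two-edge path, which is not \TC.
Its source edges are $ab,bc$, its sink edges are $bc,ca$, and its unique divergent edge is $bc$.
With the reflexive reading of $\leadsto$ on edges that the paper relies on (e.g., for the kernel to contain the divergent edges), $bc$ is also a div-sink edge, in fact the unique convergent edge.
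So $bc$ lies in both (i) and (iii), and $|E|=3<|\mathcal{S}|+|\mathcal{D}|=4$.
Reading $\leadsto$ irreflexively would restore disjointness. But then the only convergent edge is $ca$, which cannot reach $b$, so \Cref{lem:phases}~(iii) fails, and your covering step relies on it. Nor is this a small-$n$ artifact: $\lambda(ab)=1$, $\lambda(bu)=2$, $\lambda(uv)=3$, $\lambda(vb)=4$, $\lambda(va)=5$ is a minimal full-range \TC graph whose pivot $uv$ is likewise in both parts.

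So this case cannot be closed, because the statement is false there. What your reasoning actually proves, in any \TC graph, is the following. If there are at least two divergent edges, then $\mathcal{S}\cap\mathcal{D}=\emptyset$. If $d$ is the only divergent edge, then $\mathcal{S}\cap\mathcal{D}=\{d\}$, and $d$ is a pivot and also the only convergent edge. The correct lemma is thus a partition up to this one shared edge: either assign it to one part by convention, or assume at least two divergent edges. The paper does not handle this either. It simply asserts the disjointness as an ``observation'', even though its last lemma (in a minimal graph, a convergent edge reaching a divergent edge forces $e_1=e_2$) explicitly allows an edge that is both divergent and convergent.
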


Through the next few lemmas, we show that the edges of the lower and upper parts must satisfy even more properties.

\begin{lemma}
  \label{lem:single-walk}
  In a minimal \TC graph, each vertex can reach exactly one divergent edge, and each vertex can be reached by exactly one convergent edge.
\end{lemma}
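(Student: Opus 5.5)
The plan is to get existence from \Cref{lem:phases} and uniqueness by contradiction with minimality, treating the divergent and convergent halves separately, since convergent edges are defined asymmetrically. Existence is immediate: \Cref{lem:phases}~(i) and~(iii) give that every vertex reaches at least one divergent edge and is reached by at least one convergent edge.

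For uniqueness on the divergent side, suppose a vertex $v$ reaches two distinct divergent edges $d_1,d_2$, via temporal walks $W_1,W_2$. First I would record that every edge $e$ with $e\leadsto d_i$ is a source edge. Indeed $d_i\leadsto V$, so \Cref{lem:basic-reachability}~(i) and~(iii) give $e\leadsto V$. Hence every edge of $W_1$ and $W_2$ is a source edge, i.e.\ lies in the lower part of \Cref{lem:decomposition}. Next I would locate a branching point. Look at the set of edges that reach both $d_1$ and $d_2$; it is nonempty, since the first edge of one walk reaches the first edge of the other through $v$, by comparing labels at $v$. Choose such an edge $e$ of maximal label. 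In the covering relation, $e$ has at most two immediate successors, one at each endpoint. By maximality of $e$, they must be two distinct edges $g_1,g_2$, where $g_1$ reaches $d_1$ but not $d_2$, and $g_2$ reaches $d_2$ but not $d_1$.

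The key step, and the one I expect to be the main obstacle, is to exhibit an edge whose removal preserves \TC, contradicting minimality. My first attempt will be to delete $g_2$, or more generally the branch below $d_2$ that is not reachable through $d_1$. I would then verify the three conditions of \Cref{lem:phases} in the reduced graph. Condition~(ii) should survive because divergent edges that do not use $g_2$ are unaffected. The delicate part is condition~(i): every vertex that previously reached a divergent edge only through $g_2$ must still reach some divergent edge. The idea is that any such vertex also reaches $e$, and hence $d_1$ through $g_1$. A secondary subtlety is that deleting an edge may change which edges are divergent. I would handle this by observing that $d_1$ remains a source edge, so some divergent edge above it still exists and the characterization still applies. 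If deleting $g_2$ fails, the fallback is to delete the earlier of the two edges at $e$'s endpoint, or to argue via the doubling of predecessors as in \Cref{lem:upper-bound-tpf}.

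For the convergent half, suppose $v$ is reached by two convergent edges $c_1,c_2$. I would first note that every edge $f$ with $c_i\leadsto f$ is a div-sink edge. Every divergent edge reaches $c_i$, hence reaches $f$ by \Cref{lem:basic-reachability}~(i). Also $V\leadsto c_i\leadsto f$, so $f$ is a sink by \Cref{lem:basic-reachability}~(ii). The argument then mirrors the divergent case in reverse time: take the edge of minimal label that is reached by both $c_1$ and $c_2$, find its two distinct immediate predecessors, and delete one branch. Condition~(iii) of \Cref{lem:phases} is then checked as condition~(i) was before. Condition~(ii) needs extra care here, because convergent edges are defined through divergent ones. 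Still, $c_1$ remains a div-sink after the deletion, so some convergent edge below it persists.
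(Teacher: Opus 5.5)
There is a genuine gap: the edge you delete is the wrong one. Your branching construction is sound. Choosing $e=xy$ of maximal label among the edges reaching both $d_1$ and $d_2$ does force its two immediate successors, $g_1$ at $x$ and $g_2$ at $y$, to split towards $d_1$ and $d_2$ respectively. The problem is the claim carrying condition~(i): that every vertex which reached a divergent edge only through $g_2$ also reaches $e$. This is false, because such a vertex may enter $g_2$ from its far endpoint. Also, $g_2$ may be $d_2$ itself, so your remark about condition~(ii) does not apply either.

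Concretely, take the $4$-cycle $a,b,c,d$ with $\lambda(ab)=\lambda(cd)=2$ and $\lambda(bc)=\lambda(da)=3$, plus the chord $\lambda(ac)=1$. This graph is happy, compressed and \TC. Its divergent edges are $ab$ and $cd$, and $a$ reaches both. Your construction gives $e=ac$, $g_1=ab$, $g_2=cd$. Deleting $cd$ leaves $d$ with the single edge $da$ at time $3$, so $d$ can no longer reach $b$ or $c$. Yet $d$ reached a divergent edge only through $g_2$ and cannot reach $e$. Symmetrically, deleting $ab$ strands $b$, and deleting more of the branch only makes things worse. The time-reversed graph defeats the mirrored step of your convergent half in the same way.

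The edge to delete is $e$ itself. Your fallback (the earlier of the two edges at an endpoint of $e$) amounts to this, and it is exactly the paper's proof: the paper takes a vertex $u$ where two walks to divergent edges branch and removes the earlier of the two edges at $u$. What you are missing is the justification:
\begin{itemize}
\item Since $e\neq d_1$ and $g_2$ does not reach $d_1$, every walk witnessing $e\leadsto d_1$ crosses $e$ from $y$ to $x$ and leaves $x$ after $\lambda(e)$. If it left from $y$, its next edge would be $g_2$ or a later edge at $y$, giving $g_2\leadsto d_1$.
\item Symmetrically, $y$ has a walk to $d_2$ departing after $\lambda(e)$.
\item Hence any walk that used $e$ to reach a divergent edge arrives at $x$ or $y$ strictly before $\lambda(e)$. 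It can be rerouted along these later walks.
\item No walk out of a divergent edge $d$ uses $e$. Otherwise $d\leadsto e\leadsto d_1$ would force $d=d_1$ by \Cref{lem:div-conv-matching}, contradicting \Cref{lem:basic-reachability}~(vi).
\end{itemize}
So in $\G\setminus e$ every vertex still reaches an original divergent edge that still reaches all of $V$, and $\G\setminus e\in\TC$ by \Cref{lem:basic-reachability}~(iii). This also removes your worry about the set of divergent edges changing, since \Cref{lem:phases} never needs to be re-applied.

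On the convergent side, delete the minimal-label edge $f$ reached by both $c_1$ and $c_2$. Your extra care is warranted there, and it suffices. Walks from vertices to divergent edges, and from divergent to convergent edges, avoid $f$, since either would create a cycle in the edge order through $c_1\leadsto f$. By minimality of $f$, $c_1$ and $c_2$ reach different endpoints of $f$ before $\lambda(f)$. So any convergent-to-vertex walk through $f$ can be restarted from the appropriate $c_i$.
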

\begin{proof}
  We prove the lemma for divergent edges. (The arguments for convergent edges are symmetrical.)
  We actually prove something stronger, namely, that each vertex has a single temporal walk that leads to a divergent edge. If not, there must exist somewhere a vertex $u$ that has two local edges, each starting a different temporal walk to some divergent edges (whether both walks end up in the same divergent edge or different ones does not matter). Let $uv$ be the earliest of these two edges. We will show that $uv$ can be removed without breaking \TC.
  First, observe that $uv$ cannot be divergent, as it reaches another divergent edge through $u$. Also observe that $v$ must be able to reach a divergent edge with a departure after $\lambda(uv)$. Thus, both $u$ and $v$ have a temporal walk after $\lambda(uv)$ towards a divergent edge. This implies that any temporal walk from other vertices that would need $uv$ to reach a divergent edge (thus, arriving at $u$ or $v$ before $\lambda(uv)$) can be reconfigured to use either walks instead of crossing $uv$, which preserves property (i) of Lemma~\ref{lem:phases}. The other two properties (ii) and (iii) are unaffected, as they involve only edges that do not reach $uv$ in $\G$. Thus $\G$ remains \TC without $uv$.\qed
\end{proof}

\begin{lemma}
  \label{lem:forest}
  In a minimal \TC graph, the lower part consists of a forest of $k$ disjoint trees of $n-k$ edges in total, where $k$ is the number of divergent edges. The same holds for the upper part with respect to convergent edges.
\end{lemma}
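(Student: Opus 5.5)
The plan is to show that the lower part, viewed as a static subgraph $F$ of the footprint, is a union of $k$ vertex-disjoint trees, each containing exactly one divergent edge, and that together they span all $n$ vertices. The number of edges is then $n-k$, since a forest on $n$ vertices with $k$ components has $n-k$ edges. The upper part follows by the symmetric argument: reverse time, so that convergent edges and div-sink edges play the roles of divergent and source edges.

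\textbf{Step 1: structure of the lower part.} By \Cref{lem:decomposition}, the lower part is exactly the set of source edges. Every source edge $e$ reaches some maximal source edge, hence some divergent edge. I will use the stronger form proved in \Cref{lem:single-walk}: every vertex has a \emph{single} temporal walk leading to a divergent edge.

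First I show that every source edge lies on such a walk. If $e=uv$ is a source edge, then $e$ reaches a divergent edge $d$, so one of its endpoints, say $u$, has a walk that starts with $e$ and reaches $d$. By uniqueness, this is the walk of $u$. So the lower part is the union, over all vertices $w$, of the edges of the unique walk $P_w$ from $w$ to its divergent edge $d(w)$.

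Next, assign to every vertex $u$ that is not an endpoint of a divergent edge its ``next'' edge, namely the first edge of $P_u$. For $x$ the other endpoint of that edge, the remainder of $P_u$ is a walk from $x$ to $d(u)$, so by uniqueness it equals $P_x$. Hence each $P_u$ is obtained by following next-pointers until reaching an endpoint of a divergent edge.

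\textbf{Step 2: counting.} This gives a parent-pointer structure. Each vertex that is not an endpoint of a divergent edge has exactly one outgoing pointer edge, and endpoints of divergent edges have none (their walk consists of the divergent edge itself, because divergent edges are incomparable by \Cref{lem:div-conv-matching}). The pointers strictly increase in time along walks, so they cannot cycle. Each vertex therefore eventually reaches a divergent edge, and each component is a tree attached to one divergent edge, which joins its two endpoint-roots.

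For the count, there are $n-2k$ non-divergent-endpoint vertices, hence $n-2k$ pointer edges, plus $k$ divergent edges, for $n-k$ edges in total. I still need to show that distinct pointers give distinct edges and that no edge is a pointer for both of its endpoints. If $uv$ were the next edge of both $u$ and $v$, then $P_u$ would continue from $v$ after time $\lambda(uv)$ while $v$'s walk starts with $uv$, which is a contradiction. Components are vertex-disjoint because each vertex has a unique pointer path, which makes $F$ a forest.

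\textbf{Main obstacle.} The hardest step is justifying that the uniqueness in \Cref{lem:single-walk} really gives ``unique walk from each vertex'' in the pointer sense, including the precise meaning of a walk to an edge and how it ends at the divergent edge. I also need to check that the pointer edges are exactly the source edges, with no extra edges such as chords between trees. For the latter, I would argue that any source edge not of this form yields two walks from one of its endpoints, which contradicts \Cref{lem:single-walk}.
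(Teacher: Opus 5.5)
Your proof is correct, but it is organized differently from the paper's.

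\textbf{The paper's route.} It first partitions $V$ into classes $V_1,\dots,V_k$ according to the unique divergent edge each vertex reaches. It partitions the lower edges accordingly into $E_1,\dots,E_k$; these are disjoint, since an edge in two classes would let its endpoints reach two divergent edges. It then rules out a cycle in each $(V_i,E_i)$ by taking the smallest-label edge $uv$ of the cycle. Both neighbouring cycle edges reach $e_i$, so $u$ or $v$ gets two walks to $e_i$, contradicting (the proof of) \Cref{lem:single-walk}. The count then follows from each class being a tree, with connectivity left implicit.

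\textbf{Your route.} You use the suffix-closure of the unique walks ($P_u=\pi(u)\cdot P_x$) to build a successor function rooted at the $2k$ endpoints of divergent edges. You then count edges directly through the injectivity of $\pi$. This gives connectivity, acyclicity and the exact count $(n-2k)+k$ at once, with no separate cycle argument. It also already shows that each non-divergent lower edge is used by exactly one of its endpoints, which is the content of \Cref{lem:bottom-forest}. The paper's version is shorter; yours makes explicit exactly where uniqueness of walks enters.

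\textbf{Points to tighten.}

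First, you argue only one inclusion of ``lower part $=\bigcup_w P_w$''. The other is immediate but needed for the count: every edge $f$ of $P_w$ satisfies $f\leadsto d(w)\leadsto V$, so $f$ is a source edge.

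Second, the upper part does not follow by literally reversing time. Convergent edges are Labahn's asymmetric notion: they are minimal among the div-sink edges, that is, among sink edges reachable from all divergent edges. In the time-reversed graph, the divergent edges are the minimal \emph{sink} edges of $\G$, and these need not coincide with the convergent ones. Instead, rerun your argument in $\G$ itself with predecessor pointers, taking for each vertex the last edge of its unique walk from a convergent edge. This uses three ingredients:
\begin{itemize}
\item the convergent half of \Cref{lem:single-walk};
\item the fact that convergent edges form a matching;
\item the fact that div-sink edges are exactly the edges equal to, or reached by, some convergent edge.
\end{itemize}
The paper is equally terse here, saying only that the argument is symmetrical.

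Finally, $P_a=d$ for an endpoint $a$ of a divergent edge $d$ follows directly from uniqueness (the one-edge walk along $d$ is already such a walk), not from incomparability.
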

\begin{proof}
  Once again, we prove the lemma for divergent edges. (The arguments for convergent edges are symmetrical.) By~\Cref{lem:single-walk}, the set of vertices $V$ can be partitioned into $V_1,V_2,\dots,V_k$ such that all vertices in $V_i$ can reach only one divergent edge $e_i$. The set of edges of the lower part is similarly partitioned into $E_1, E_2, \dots, E_k$ (each including the corresponding $e_i$). Indeed, if two such sets intersect on some edge $e$, then the endpoints of $e$ can reach several divergent edges (a contradiction). Consider the graph $(V_i,E_i,\lambda_{|_{E_i}})$. For the sake of contradiction, suppose that the footprint $(V_i,E_i)$ has a cycle, and let $uv$ be the edge with smallest label in $E_i$ (breaking ties arbitrarily). All the edges of $E_i$ can reach $e_i$. In particular, the adjacent edges of $uv$ in the cycle can reach $e_i$. But then $u$ (or $v$) has two temporal walks towards $e_i$, contradicting (the proof of)~\Cref{lem:single-walk}. It follows that each graph $(V_i,E_i)$ is a tree of $|V_i|-1$ edges, which concludes the proof.\qed
\end{proof}

\begin{lemma}
  \label{lem:bottom-forest}
  Apart from the divergent edges themselves, every edge of the lower part is needed only in one direction. The same holds for the upper part excluding convergent edges.
\end{lemma}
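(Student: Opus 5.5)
The plan is to use the tree structure from \Cref{lem:forest} together with the single-walk property from \Cref{lem:single-walk}. We treat the lower part; the upper part follows by the time-reversal symmetry already invoked in those lemmas.

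Fix a divergent edge $e_i$ and its tree $(V_i,E_i)$. Let $uv\in E_i$ with $uv\neq e_i$. Removing $uv$ from the tree splits it into two components, and $e_i$ lies in exactly one of them, say the one containing $v$. We claim that the only way $uv$ is ever used in a walk towards $e_i$ is by crossing it from $u$ to $v$.

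Step one is to show this claim. Every edge of $E_i$ reaches $e_i$, so some temporal walk traverses $uv$ and then continues to $e_i$. By \Cref{lem:forest}, the walks witnessing reachability to $e_i$ stay inside the tree $E_i$: any walk ending at $e_i$ whose edges lie in the lower part uses only edges that reach $e_i$, and these are exactly the edges of $E_i$ by the partition. In a tree, a walk from $uv$ to $e_i$ that does not repeat edges must leave $uv$ through its endpoint on the side of $e_i$, namely $v$. Hence any such walk crosses $uv$ from $u$ to $v$.

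Step two is to show that crossing $uv$ in the other direction, from $v$ to $u$, is never needed for temporal connectivity. Suppose some walk required for \TC crosses $uv$ from $v$ to $u$. By the partition of \Cref{lem:decomposition}, such a walk serves one of three purposes: (a) reaching a divergent edge, (b) travelling inside the kernel, or (c) reaching vertices from convergent edges. Cases (b) and (c) only use edges reachable from a divergent edge. An edge $uv$ of the lower part other than $e_i$ reaches $e_i$, so it cannot also be reached from a divergent edge without creating a comparability that contradicts \Cref{lem:basic-reachability}~(vi) or the incomparability of divergent edges. For case (a), after crossing $v\to u$ the walk would have to reach $e_i$ from $u$ later, which requires going back through $uv$. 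That is impossible in a tree without repeating the edge, which contradicts the single walk of \Cref{lem:single-walk}.

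The main obstacle is making case (b)/(c) airtight. We must show that no lower-part edge other than a divergent edge is reachable from any divergent edge. Suppose $e_j\leadsto uv$ for some divergent $e_j$. Since $uv\leadsto e_i$ and edge reachability is transitive (\Cref{lem:basic-reachability}~(i)), we get $e_j\leadsto e_i$. This forces $j=i$ by incomparability, and then $e_i\leadsto uv\leadsto e_i$ contradicts (vi). This settles the claim.
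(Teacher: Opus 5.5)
Your proof is correct and follows the same route as the paper. Like the paper, it combines the forest structure of \Cref{lem:forest} with the observation that non-divergent lower-part edges only serve phase (i) of \Cref{lem:phases}, namely reaching their own divergent edge, while all later reachability starts from that divergent edge. You make explicit two points the paper leaves implicit: an edge-simple walk in the tree must cross $uv$ towards the side of $e_i$, and no non-divergent source edge can be reached from a divergent edge, so phases (ii) and (iii) never use it.
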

\begin{proof}
  The lower part consists of disjoint trees for each divergent edge. Every vertex in the trees only needs to reach the corresponding divergent edge (reachability to other vertices is handled in later phases \emph{from} the divergent edge), thus the edges of each tree only need to be crossed \emph{towards} the corresponding divergent edge.\qed
\end{proof}

The situation is summarized in~\Cref{fig:minimal-TC}. As in the previous figure, the picture is drawn from the point of view of the poset. It may therefore appear to be a slight abuse of graphical notation to represent the forest in the lower part of $\G$ as forests in the poset. However, the lower part of the covering relation of the poset is itself a forest, although we do not prove this here, which makes this abuse of notation rather harmless.

\begin{figure}[h]
  \centering
  \begin{tikzpicture}[xscale=.75,yscale=.8]
    \fill[cyan!20] plot[smooth] coordinates {
      (0,0)
      (1.5,2)
      (2.5,1)
      (4,2.5)
      (6,0)
    }-- (0,0);
    \fill[magenta!10] plot[smooth] coordinates {
      (0,5)
      (1,3)
      (2,3.5)
      (3,2.5)
      (4,4)
      (5,3)
      (6,5)
    };
    \fill[magenta!22] plot[smooth] coordinates {
      (0,5)
      (1,3.7)
      (2,4.2)
      (3,3.2)
      (4,4.5)
      (5,3.7)
      (6,5)
    };

    \tikzstyle{every node}=[defnode,green,minimum size=2.5pt]
    \path (.5,0.1) node (m1){};
    \path (1.5,0.3) node (m2){};
    \path (2.5,0.1) node (m3){};
    \path (3.5,0.3) node (m4){};
    \path (4.5,0.5) node (m5){};
    \path (5.5,0.1) node (m6){};

    \tikzstyle{every node}=[defnode,red,minimum size=2.5pt]
    \path (.5,4.9) node (M1){};
    \path (1.5,4.7) node (M2){};
    \path (2.5,4.7) node (M3){};
    \path (3.5,4.9) node (M4){};
    \path (4.5,4.9) node (M5){};
    \path (5.5,4.9) node (M6){};

    \tikzstyle{every node}=[defnode,minimum size=2.5pt]
    \path (1,1) node (i1){};
    \path (1.8,1) node (i2){};
    \path (3.8,1) node (i3){};
    \path (5,1) node (i4){};
    \path (2.95,4.2) node (I2){};
    \path (4.98,4.3) node (I3){};

    \tikzstyle{every node}=[defnode,fill=cyan,minimum size=5pt]
    \path (1.55,1.8) node (d1){};
    \path (3.93,2.3) node (d2){};

    \tikzstyle{every node}=[defnode,fill=magenta!70,minimum size=5pt]
    \path (1.05,3.85) node (c1){};
    \path (2.95,3.38) node (c2){};
    \path (4.98,3.85) node (c3){};

    \draw (m1) -- (i1);
    \draw (m2) -- (i1);
    \draw (m3) -- (i2);
    \draw (m4) -- (i3);
    \draw (m5) -- (i4);
    \draw (m6) -- (i4);
    \draw (i1) -- (d1);
    \draw (i2) -- (d1);
    \draw (i3) -- (d2);
    \draw (i4) -- (d2);
    \draw (c2) -- (I2);
    \draw (c3) -- (I3);
    \draw (M1) -- (M2);
    \draw (M2) -- (c1);
    \draw (I3) -- (M5);
    \draw (I3) -- (M6);
    \draw (M4) -- (M3);
    \draw (M3) -- (I2);
    
    \draw (d1) edge[decorate, decoration={snake, amplitude=.3mm, segment length=3mm},shorten >=1pt,->] (c1);
    \draw (d2) edge[decorate, decoration={snake, amplitude=.3mm, segment length=3mm},shorten >=1pt,->] (c1);
    \draw (d1) edge[decorate, decoration={snake, amplitude=.3mm, segment length=3mm},shorten >=1pt,->] (c2);
    \draw (d2) edge[decorate, decoration={snake, amplitude=.3mm, segment length=3mm},shorten >=1pt,->] (c2);
    \draw (d1) edge[decorate, decoration={snake, amplitude=.3mm, segment length=3mm},shorten >=1pt,->] (c3);
    \draw (d2) edge[decorate, decoration={snake, amplitude=.3mm, segment length=3mm},shorten >=1pt,->] (c3);

    \tikzstyle{every node}=[font=\footnotesize]
    \path (0,1) node[left,cyan!60]{Sources};
    \path (6,0) node[below left,green!70]{Minimum edges ($E^-$)};
    \path (6,5) node[above left,red!70]{Maximum edges ($E^+$)};
    \path (0,3.5) node[left,magenta!40]{Sinks};
    \path (0,4.5) node[left,magenta!60]{Div-sinks};
    \path (5,2.7) node[right]{Kernel};
  \end{tikzpicture}
  \caption{\label{fig:minimal-TC} General structure of minimal \TC graphs.}
\end{figure}

\begin{conjecture}
  \TC graphs that are minimal and acyclic have at most $O(n\log n)$ edges.
\end{conjecture}

Note that two-path-free graphs are both minimal and acyclic, and they have at most $O(n \log n)$ edges. However, there exist minimal and acyclic \TC graphs which are not two-path-free, so the question is open. Also note that neither acyclicity nor minimality alone guarantee $o(n^2)$ edges, as there exist (non-acyclic) minimal \TC graphs with $\Theta(n^2)$ edges and (non-minimal) acyclic \TC graphs with $\Theta(n^2)$.

\subsection{Additional comments on pivotability and acyclic graphs}

We already saw that acyclic \TC graphs are extremally matched. Using the content of~\Cref{sec:decomposition}, we can actually say something stronger:

\begin{lemma}
  If $\G$ is an acyclic \TC graph, then the divergent edges are exactly $E^-$, and the convergent edges are exactly $E^+$. In other words, the kernel is the entire graph.
\end{lemma}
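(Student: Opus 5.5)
The plan is to show first that in an acyclic \TC graph the source edges are \emph{exactly} $E^-$ and the sink edges are exactly $E^+$. The equalities for divergent and convergent edges then follow by checking that no two edges of $E^-$ (resp.\ $E^+$) are comparable. Throughout I use that, by~\Cref{lem:extremal-matching}, $E^-$ is a perfect matching. Hence if $ux\in E^-$, then $ux=e^-(u)=e^-(x)$, since each vertex is incident to exactly one edge of the matching. The symmetric statement holds for $E^+$. The case $n=2$ is trivial (a single edge), so I assume $n>2$ where needed.

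\emph{Step 1: source edges $=E^-$.} By~\Cref{lem:basic-reachability}~(v), every edge of $E^-$ is a source edge. Conversely, let $e=uv$ be a source edge and suppose $e\notin E^-$. Then $e\neq e^-(u)=ux$, so $\lambda(ux)<\lambda(e)$, and $ux=e^-(x)$. Since $e$ is a source edge, $e\leadsto x$: some walk traverses $e$ and afterwards ends at $x$. Its last edge enters $x$ at a time at least $\lambda(e)>\lambda(ux)$, so that last edge is not $ux$. Now build a cycle at $x$. Go first $x\to u$ along $ux$ at time $\lambda(ux)$. Then follow the given walk from the moment it is at $u$ around $\lambda(e)$, namely either crossing $e$ from $u$ to $v$, or starting just after crossing $e$ from $v$ into $u$. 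Continue until the walk reaches $x$. All times used after $ux$ are at least $\lambda(e)>\lambda(ux)$, and the closing edge differs from $ux$. This gives a genuine temporal cycle at $x$, contradicting acyclicity. I expect this case handling (the direction in which the witnessing walk traverses $e$) to be the only delicate point, and the argument above is designed to treat both directions uniformly. Symmetrically, using $e^+$, the sink edges are exactly $E^+$.

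\emph{Step 2: divergent edges $=E^-$.} Divergent edges are the maximal source edges, so it suffices to show that no two distinct edges of $E^-$ are comparable. Suppose $e_1\leadsto e_2$ with $e_2=ab\in E^-$ and $e_1\neq e_2$. The witnessing walk reaches $e_2$ through an endpoint, say $a$, arriving by an edge other than $e_2$ at a time smaller than $\lambda(e_2)$. This is impossible, because $e_2=e^-(a)$ is the earliest edge at $a$. Hence every edge of $E^-$ is maximal among source edges.

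\emph{Step 3: convergent edges $=E^+$.} Div-sink edges are sink edges, hence lie in $E^+$ by Step~1. Conversely, each edge of $E^+$ is reached by every edge of $E^-$, which are the divergent edges by Step~2, thanks to~\Cref{lem:basic-reachability}~(v). So the div-sink edges are exactly $E^+$. The argument of Step~2, run with $e^+$, shows that no two edges of $E^+$ are comparable, so all of them are minimal, i.e.\ convergent.

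Finally, the kernel is everything. Take any edge $f=ab$. Then $e^-(a)\leadsto f$, or $f=e^-(a)$ itself. Likewise $f\leadsto e^+(a)$, or $f=e^+(a)$. So $f$ is reached by a divergent edge and reaches a convergent edge.
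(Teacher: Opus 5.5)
Your proof is correct, but its key step differs from the paper's. The paper proves explicitly only that divergent edges lie in $E^-$, and it does so through compression. If a divergent edge $e$ were not in $E^-$, then $\lambda(e)>1$, so a contiguous caterpillar of $e$ contains a label-$1$ edge $uv\neq e$ with $u\leadsto e$. Since $e$ is a source edge, $u\leadsto e\leadsto u$ then closes a temporal cycle.

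You instead take the earliest edge $ux$ at an endpoint $u$ of $e$ and close the cycle at $x$ using the suffix, from $u$ onward, of a walk witnessing $e\leadsto x$. This argument is purely local and never uses compression, so it holds for arbitrary happy graphs. It also gives the stronger fact that all source edges, not only the divergent ones, are exactly $E^-$, and symmetrically that the sink edges are exactly $E^+$.

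Your write-up is also more complete than the paper's, which leaves three points implicit:
\begin{itemize}
\item the reverse inclusion of $E^-$ in the divergent edges, which you obtain from the fact that $E^-$ is an antichain;
\item the convergent side, which is not a plain dual of the divergent side because of the asymmetric definition via div-sink edges; you handle it correctly by checking with Lemma~\ref{lem:basic-reachability}~(v) that every edge of $E^+$ is reached by all divergent edges;
\item the kernel claim.
\end{itemize}

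A minor remark: in Step~1 the identity $ux=e^-(x)$ is never used; only $\lambda(ux)<\lambda(e)$ matters. Running your Step~1 construction at both endpoints of a source edge shows directly that it is the minimum edge of both its endpoints. That is exactly what Step~2 needs, so there you could avoid invoking the perfect-matching lemma.
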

\begin{proof}
  By contradiction, let $e\notin E^-$ be a divergent edge. The caterpillar from $e$ contains an edge $uv$ of label $1$. This edge is necessarily different from $e$ (as otherwise $e\in E^-$). Thus, $u \leadsto e \leadsto u$ implies a temporal cycle at~$w$.\qed
\end{proof}

\begin{lemma}
  Acyclic \TC graphs (with $n>2$) are not pivotable.
\end{lemma}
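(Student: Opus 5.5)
We argue by contradiction: suppose $\G$ is acyclic, \TC, has $n>2$ vertices, and admits a pivot edge $e=uv$. Since $e$ is a sink edge and a source edge, every vertex $w$ satisfies $w\leadsto e$ and $e\leadsto w$. The plan is to pick a vertex $w\notin\{u,v\}$, which exists because $n>2$, and glue a walk $w\leadsto e$ to a walk $e\leadsto w$ into a temporal cycle at $w$. This contradicts acyclicity.

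For the gluing step we reuse the argument from Lemma~\ref{lem:basic-reachability}~(i)--(iii). Take a walk $W_1$ from $w$ that traverses $e$, and truncate it right after $e$. Take a walk $W_2$ that traverses $e$ and ends at $w$, and truncate it so that it starts with $e$. If both walks cross $e$ in the same direction, drop $e$ from $W_1$ and concatenate. If they cross it in opposite directions, drop $e$ from both and concatenate at the common endpoint. Labels remain increasing because everything in $W_1$ before $e$ is earlier than $\lambda(e)$, and everything in $W_2$ after $e$ is later. This yields a closed temporal walk at $w$.

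The only delicate point is that the resulting closed walk must be nonempty and must not be the degenerate back-and-forth walk along a single edge, which the paper excludes. Because $w\neq u,v$, the walk from $w$ contains at least one edge before reaching an endpoint of $e$. If the two pieces cross $e$ in opposite directions, the composed walk is still nonempty. Should it repeat edges, we can shortcut it to a simple closed temporal walk containing $w$, or at least to some nontrivial temporal cycle, which suffices to contradict acyclicity. The same argument also appears in the second half of the proof of Lemma~\ref{lem:extremal-matching}, so I expect this step to be routine.

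An alternative, essentially equivalent, route goes through the previous lemma. A pivot edge is a source edge, so $e\leadsto e'$ for every $e'\in E^+$. The pivot edge is therefore reachable only if it is dominated by or equal to a divergent edge. Since divergent edges are exactly $E^-$, $e$ would lie in $E^-$, and symmetrically in $E^+$. This contradicts the disjointness of $E^-$ and $E^+$ for $n>2$ from Lemma~\ref{lem:extremal-matching}. I would present the direct cycle argument as the main proof.
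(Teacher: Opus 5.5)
Your proof is correct and is the paper's own argument: pick $w\notin\{u,v\}$ (possible since $n>2$) and glue $w\leadsto e$ to $e\leadsto w$ into a temporal cycle at $w$, which contradicts acyclicity. Your hedge about repeated edges is unnecessary. Every label on the $W_1$ side is below $\lambda(e)$ and every label on the $W_2$ side is above it, so the glued walk has strictly increasing labels and is therefore automatically simple.
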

\begin{proof}
  If $e=uv$ is a pivot edge, then for all $w\notin \{u,v\}$, we have $w\leadsto e \leadsto w$. As $w$ is not an endpoint of $e$, this implies a temporal cycle for $w$.\qed
\end{proof}

  \begin{lemma}
    \label{lem:minmax}
    Let $\G \in \TC$. If $e= \min(E^+)$ is unique, then $e$ is a pivot edge.
  \end{lemma}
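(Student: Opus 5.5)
We have to show that $e$ is both a sink edge and a source edge.

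\emph{Sink part.} Every edge of $E^+$ is a sink edge: for any $w$ and $u$, \TC gives $w\leadsto u$, and then \Cref{lem:basic-reachability}~(iv) gives $w\leadsto e^+(u)$. Since $e=\min(E^+)$ lies in $E^+$, we get $V\leadsto e$, so $e$ is a sink edge. This step uses neither uniqueness nor compression.

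\emph{Source part.} We must show $e\leadsto w$ for every vertex $w$.
\begin{itemize}
\item If $w$ is an endpoint of $e$, this is immediate, since the walk consisting of $e$ alone ends at $w$.
\item Otherwise, consider $f=e^+(w)\in E^+$, which is distinct from $e$ because $w$ is not an endpoint of $e$. Since $e$ is the minimum of $E^+$ and its label is unique, $\lambda(e)<\lambda(f)$.
\item \Cref{lem:unique} then gives $e\leadsto f$, i.e., a temporal path that traverses $e$ and later traverses $f=uw$. Appending a final step along $f$ into $w$ yields $e\leadsto w$.
\end{itemize}

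\emph{The orientation subtlety.} The only point needing care is the last step: the walk witnessing $e\leadsto f$ might cross $f$ in the direction from $w$ to $u$, in which case it ends at $u$ rather than $w$. Two fixes are available.
\begin{itemize}
\item \textbf{Truncation.} Cut the walk just before it traverses $f$. At that moment it is at $w$, and it has already traversed $e$, because $e\neq f$ and the walk traverses $e$ before $f$.
\item \textbf{Via \Cref{lem:basic-reachability}~(iii).} We have $e\leadsto f$ and $f\leadsto w$ (take $f$ alone, crossed towards $w$). Chaining them gives $e\leadsto w$, handling directions as in the proof of (i) and (ii).
\end{itemize}
Either way $e$ is a source edge.

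\emph{Conclusion.} Combining both parts, $e$ is a pivot edge. The main (mild) obstacle is just making the direction argument in the chaining step rigorous, and checking that \Cref{lem:unique} applies. It requires only that the label of $e$ be unique and strictly smaller than that of $f$, both of which hold by hypothesis in the compressed setting.
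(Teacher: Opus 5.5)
Your proof is correct and follows the paper's argument. The sink property holds because $e$ is the maximum edge of some vertex and \G is \TC. The source property holds because minimality and uniqueness of $\lambda(e)$ in $E^+$ let \Cref{lem:unique} give $e\leadsto e^+(w)$ for every $w$ not on $e$, hence $e\leadsto w$. Your truncation argument settles the orientation issue that the paper leaves implicit. Your second fix is an edge--edge--vertex analogue of \Cref{lem:basic-reachability}~(iii) rather than (iii) itself, but the truncation alone suffices.
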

\begin{proof}
  Each vertex can reach $e$ because it is a maximum edge and this edge can reach all vertices because it is unique and all the other maximum edges have a larger label (\Cref{lem:unique}).\qed
\end{proof}

  \begin{lemma}
  Full-range \TC graphs have a pivot edge.
\end{lemma}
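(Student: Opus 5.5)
The plan is to reduce the statement directly to \Cref{lem:minmax}. In a full-range graph every label is unique (\Cref{def:full-range}). In particular the label of $e=\min(E^+)$, the edge of $E^+$ with the smallest label, is unique. Since all labels are distinct, this minimum is well defined and attained by a single edge, so there are no ties to worry about. Because $\G\in\TC$ is assumed, \Cref{lem:minmax} applies verbatim and yields that $e$ is a pivot edge, hence $\G$ is pivotable.

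To make the argument self-contained, I would unfold \Cref{lem:minmax} in two steps. First, $e$ is a sink edge: $e=e^+(v)$ for some $v$, and by \Cref{lem:basic-reachability}~(v) every $u\in V$ satisfies $u\leadsto e^+(v)=e$.

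Second, $e$ is a source edge. For every vertex $w$, its maximum edge $e^+(w)$ lies in $E^+$, so either $e^+(w)=e$ or $\lambda(e)<\lambda(e^+(w))$. In the first case $w$ is an endpoint of $e$, so $e\leadsto w$ trivially. In the second case, \Cref{lem:unique} (uniqueness of $\lambda(e)$) gives $e\leadsto e^+(w)$, and since $e^+(w)$ is incident to $w$ we get $e\leadsto w$.

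The only point needing care is the non-strict inequality in the case analysis, which is precisely why uniqueness of labels matters: two distinct edges of $E^+$ cannot share the minimum label. One could also remark that full-range is stronger than needed. Only the uniqueness of the label of $\min(E^+)$ is used, so the lemma really follows as a corollary of \Cref{lem:minmax}.
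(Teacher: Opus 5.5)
Your proposal is correct and matches the paper's proof: since all labels are unique, $\min(E^+)$ has a unique label, so \Cref{lem:minmax} applies directly to the \TC graph. Your unfolding of \Cref{lem:minmax} (sink via \Cref{lem:basic-reachability}~(v), source via \Cref{lem:unique} applied to each $e^+(w)$) is also the same argument the paper gives for that lemma.
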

\begin{proof}
  If all labels are unique, then $e=\min(E^+)$ is unique. We can thus apply Lemma~\ref{lem:minmax}.\qed
\end{proof}

\begin{lemma}
  If a \TC graph has a single divergent edge, then this edge is a pivot.\\(the same holds if it has a single convergent edge)
\end{lemma}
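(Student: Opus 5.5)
Let $d$ be the unique divergent edge of a \TC graph $\G$. The plan is to show directly that $d$ is both a source edge and a sink edge. That $d$ is a source edge is immediate from the definition, since divergent edges are (maximal) source edges. So the whole task is to show $V \leadsto d$, i.e., that every vertex can reach $d$.

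To get $V\leadsto d$, I would use the characterization of Lemma~\ref{lem:phases}. Because $\G$ is \TC, condition~(i) says every vertex $v$ reaches at least one divergent edge. Since $d$ is the only one, this gives $v\leadsto d$ for every $v$. Hence $d$ is a sink edge and thus a pivot edge.

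If one prefers not to rely on the ``only if'' direction of Lemma~\ref{lem:phases}, here is a self-contained argument. Fix a vertex $v$. The edge $e^-(v)$ is a source edge by Lemma~\ref{lem:basic-reachability}~(v). The set of source edges reachable from $e^-(v)$ is finite and nonempty. By Lemma~\ref{lem:basic-reachability}~(vi) and transitivity~(i), edge reachability is a strict partial order, so this set has a maximal element $e'$. This $e'$ is maximal among \emph{all} source edges: if $e'\leadsto e''$ with $e''$ a source edge, then $e^-(v)\leadsto e''$ by transitivity, so $e''$ would lie in the set, contradicting maximality. Thus $e'$ is divergent, so $e'=d$, and $v\leadsto e^-(v)\leadsto d$ gives $v\leadsto d$ by Lemma~\ref{lem:basic-reachability}~(ii).

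For the parenthetical claim about a single convergent edge $c$, the argument is symmetric but needs more care, because the definition of convergent edges is asymmetric. That $c$ is a sink edge holds by definition, since div-sink edges are sink edges. To show $c\leadsto V$, fix $v$. The edge $e^+(v)$ is a div-sink edge: it is a sink edge by Lemma~\ref{lem:basic-reachability}~(v), and every divergent edge, being a source, reaches $v$ and hence reaches $e^+(v)$ by Lemma~\ref{lem:basic-reachability}~(iv). Taking a minimal div-sink edge below $e^+(v)$ yields a convergent edge, which by the same downward-closure argument must be $c$. Hence $c\leadsto e^+(v)$, and appending nothing more, $c\leadsto v$. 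The main obstacle is this downward-closure step: checking that a div-sink edge minimal among those reaching $e^+(v)$ is minimal globally. This holds by transitivity exactly as in the divergent case, and Lemma~\ref{lem:phases}~(iii) also gives it directly.
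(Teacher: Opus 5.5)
Your proof is correct and follows the paper's argument. The unique divergent edge is a source edge by definition, and Lemma~\ref{lem:phases}~(i) forces every vertex to reach it, so it is also a sink edge; your self-contained maximal-element argument and your explicit handling of the convergent case (using that $e^+(v)$ is a div-sink edge) correctly fill in what the paper leaves implicit.
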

\begin{proof}
  Let $\G$ have a single divergent edge $e$. As every vertex can reach at least one divergent edge, it can reach $e$, and by definition $e$ can reach all the vertices. Thus, $e$ is a pivot.\qed
\end{proof}

\begin{lemma}
  \label{lem:two-pivots}
  If a \TC graph has more than one pivot edge, then it is not minimal.
\end{lemma}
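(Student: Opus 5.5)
Suppose $\G\in\TC$ has two distinct pivot edges $e_1$ and $e_2$. The plan is to show that one of them can be deleted without losing temporal connectivity, which means $\G$ is not minimal.

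First, by Lemma~\ref{lem:basic-reachability}~(vi), we cannot have both $e_1\leadsto e_2$ and $e_2\leadsto e_1$. We then split into two cases.

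\emph{Case 1: $e_1$ and $e_2$ are comparable.} Say $e_1\leadsto e_2$. Since $e_1$ is a sink edge, every vertex reaches it, and then reaches $e_2$ through it by Lemma~\ref{lem:basic-reachability}~(ii).

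\emph{Case 2: $e_1$ and $e_2$ are incomparable.} Here I would first use the pivot property: every vertex $u$ reaches $e_1$, and $e_2$ reaches every vertex. I want the stronger fact that every $x\in V$ has a walk to every $y\in V$ that avoids one fixed pivot. The natural choice is to route all walks through $e_1$, since $x\leadsto e_1\leadsto y$ gives $x\leadsto y$ by Lemma~\ref{lem:basic-reachability}~(iii).

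The main obstacle is that such a walk might still use $e_2$ somewhere. To handle this, I would build the walk $x\leadsto e_1\leadsto y$ by concatenating a walk $W$ from $x$ to $e_1$ with a walk $W'$ from $e_1$ to $y$.

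If $W$ traversed $e_2$, then $e_2\leadsto e_1$. In Case 2 this is impossible. In Case 1 (with $e_1\leadsto e_2$) it is also impossible, because it would give both $e_1\leadsto e_2$ and $e_2\leadsto e_1$, contradicting Lemma~\ref{lem:basic-reachability}~(vi).

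If $W'$ traversed $e_2$, then $e_1\leadsto e_2$. In Case 2 this is again impossible. In Case 1 it is not excluded, so there I would swap roles and route every walk $x\leadsto e_2\leadsto y$, aiming to delete $e_1$ instead. By the same reasoning, the segment ending at $e_2$ cannot contain $e_1$ strictly after... and it is here that the argument breaks down. A walk from $x$ to $e_2$ may well pass through $e_1$, and a walk from $e_1$ to $y$ may pass through $e_2$. So the comparable case is the genuinely hard one.

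For the comparable case I would use minimality more cleverly. Suppose $e_1\leadsto e_2$. I want to show $e_1$ is redundant, i.e.\ that every pair $x,y$ has a walk avoiding $e_1$. The idea is to use that every vertex reaches $e_2$ and $e_2$ reaches every vertex, then rebuild such walks via the caterpillar of $e_2$ and the compression structure.

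Alternatively, I expect the intended argument to be simpler and to exploit the decomposition: a pivot edge is both a source edge and a sink edge. By Lemma~\ref{lem:decomposition}, in a minimal \TC graph the source edges and the div-sink edges are disjoint. So I would first check that a pivot edge is a div-sink edge, i.e.\ that every divergent edge reaches it. This holds because a pivot $e$ is itself a source edge, so it lies below some divergent edge $d$ or is one. Every other divergent edge $d'$ must reach $e$ because $e$ is a sink edge and $d'$ has an endpoint that reaches $e$, with care about timing.

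Actually the cleanest route is as follows. If there are two pivots, then every pivot is a source edge, hence lies in the lower forest of Lemma~\ref{lem:forest}. Each pivot is also a sink edge. Then Lemma~\ref{lem:single-walk} forces each vertex to have a unique walk to a divergent edge. I would derive a contradiction from two pivots $e_1\leadsto e_2$ lying in the trees: $e_2$ is a source edge reached by $e_1$, so $e_1$ is not maximal, and the endpoints of $e_1$ get two walks toward divergent edges. Checking this last step carefully is where I expect the difficulty to lie.
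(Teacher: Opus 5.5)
The gap is the comparable case $e_1\leadsto e_2$, which you never close. Your incomparable case is correct and is the paper's argument: walks routed through one pivot cannot use the other, so the other can be deleted.

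None of your three routes for the comparable case works as stated. Deleting one of the two pivots cannot succeed in general. Take the graph with edges $ca(1)$, $db(1)$, $za(2)$, $ab(3)$, $bc(4)$, $cd(5)$, $ad(6)$, $zd(7)$. Both $ab$ and $bc$ are pivots, with $ab\leadsto bc$. Without $ab$, the vertex $z$ reaches only $a$ and $d$; without $bc$, it cannot reach $c$. Meanwhile $ca$, for instance, is removable. So the target ``$e_1$ is redundant'' is false.

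Your second route rests on a false claim. Take a divergent edge $d$ with $e_2\leadsto d$ or $e_2=d$. Then $e_1\leadsto d$ and $e_1\neq d$, so $d\not\leadsto e_1$ by Lemma~\ref{lem:basic-reachability}~(vi). Hence $e_1$ is \emph{not} a div-sink edge, precisely in the case you need.

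In your third route, ``$e_1$ is not maximal'' holds for every non-divergent source edge. In a minimal graph, the endpoints of all such edges have unique walks by Lemma~\ref{lem:single-walk}, so non-maximality alone cannot produce a second walk. You also never use that $e_1$ is a sink edge.

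The idea you are missing is the paper's counting argument, which never names the removable edge. Suppose $e_1\leadsto e_2$.
\begin{itemize}
\item The edges reachable from $e_2$, together with $e_2$, connect all $n$ vertices. So at least $n-2$ edges other than $e_2$ are reachable from $e_2$.
\item Symmetrically, at least $n-2$ edges other than $e_1$ reach $e_1$.
\item An edge in both sets would give $e_2\leadsto e_1$, contradicting (vi). Neither set contains $e_1$ or $e_2$. Hence $|E|\ge 2n-2$.
\item A pivotable \TC graph contains a \TC spanning subgraph with $2n-3$ edges: an in-tree of $n-2$ edges into a pivot, the pivot, and an out-tree of $n-2$ edges. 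So some edge lies outside it and can be deleted.
\end{itemize}

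Alternatively, your third route can be rescued by looking at the endpoints of $d$ rather than those of $e_1$. Each endpoint of $d$ must reach the sink edge $e_1$ without crossing $d$, since otherwise $d\leadsto e_1$. Compose this with a walk witnessing $e_1\leadsto d$. For at least one endpoint, this gives a walk to $d$ whose first edge is not $d$, in addition to the one-edge walk along $d$. That contradicts the strong form of Lemma~\ref{lem:single-walk}, which says each vertex has a single walk to a divergent edge.
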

\begin{proof}
  Suppose the graph has at least two pivot edges $e_1$ and $e_2$. If these edges are comparable, say $e_1 \leadsto e_2$, then $e_2$ can reach at least $n-2$ other edges (the minimum number of edges for $e_1$ to reach all vertices is $n-1$, including $e_2$), and $e_1$ can be reached by at least $n-2$ edges. Overall, this implies that $\G$ has at least $2(n-2)+2=2n-2$ edges. But since the graph is pivotable, there exists a \TC spanner with $2n-3$ edges, thus $\G$ is not minimal. If $e_1$ and $e_2$ are incomparable, then removing either one, say $e_1$, does not impact the reachability $V \leadsto e_2 \leadsto V$, thus $\G$ is again not minimal.\qed
\end{proof}

\begin{lemma}
  Let $\G$ be a minimal \TC graph. Then the number of divergent edges is at most the number of edges with label $1$.
\end{lemma}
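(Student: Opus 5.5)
The plan is to build an injective map from divergent edges to edges with label $1$, using the forest structure of the lower part (\Cref{lem:forest}) together with contiguous caterpillars. Let $e_1,\dots,e_k$ be the divergent edges. By \Cref{lem:single-walk} and the proof of \Cref{lem:forest}, the vertices split as $V_1,\dots,V_k$ and the lower-part edges split as $E_1,\dots,E_k$, where every vertex of $V_i$ (and every edge of $E_i$) reaches $e_i$ and no other divergent edge. The $E_i$ are pairwise disjoint. So it suffices to show that each $E_i$ contains at least one edge with label $1$.

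\textbf{Step 1 (a label-$1$ edge reaching $e_i$).} Since the graph is compressed, take a caterpillar from $e_i$. By \Cref{lem:caterpillar}~(i) it contains an edge $f_i$ with $\lambda(f_i)=1$, and by \Cref{lem:caterpillar}~(ii), $f_i\leadsto e_i$ (or $f_i=e_i$, which is trivially fine).

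\textbf{Step 2 ($f_i$ lies in the lower part).} An edge of label $1$ is the minimum edge of both its endpoints, so $f_i\in E^-$. By \Cref{lem:basic-reachability}~(v), since $\G\in\TC$, $f_i$ reaches every vertex, so $f_i$ is a source edge. By \Cref{lem:decomposition}, $f_i$ then belongs to the lower part.

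\textbf{Step 3 ($f_i\in E_i$).} Let $f_i=uv$. The walk witnessing $f_i\leadsto e_i$ starts at an endpoint of $f_i$, say $u$, traverses $f_i$, and continues to $e_i$; hence $u\leadsto e_i$. By \Cref{lem:single-walk}, $u$ reaches exactly one divergent edge, so $u\in V_i$. The edge $f_i$ lies in some $E_j$, and every edge of $E_j$ reaches $e_j$, so $f_i\leadsto e_j$. The endpoints of edges of $E_j$ lie in $V_j$, so $u\in V_j$, which forces $j=i$.

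Since the $E_i$ are disjoint, the map $e_i\mapsto f_i$ is injective, which gives the bound.

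The main obstacle is Step 3: making rigorous that membership of an edge in $E_i$ is determined by which divergent edge its endpoints reach. This relies on the partition argument inside the proof of \Cref{lem:forest}, namely that an edge shared by two classes would let its endpoints reach two divergent edges. I would restate that argument explicitly for $f_i$ rather than only cite it.
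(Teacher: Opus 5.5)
Your proof is correct and follows the same approach as the paper: a caterpillar from each divergent edge yields a label-$1$ edge reaching it, and \Cref{lem:single-walk}, applied to an endpoint of that edge, prevents one label-$1$ edge from reaching two divergent edges. The paper phrases this as a direct pigeonhole argument, so your Step 2 and the detour through the partition $E_1,\dots,E_k$ from \Cref{lem:forest} are harmless but unnecessary, since the disjointness you invoke is itself just \Cref{lem:single-walk}.
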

\begin{proof}
  Every divergent edge can be reached by at least one edge with label $1$ (caterpillar argument). Thus, if there are more divergent edges than edges labeled $1$, then at least one edge $uv$ labeled $1$ can reach several divergent edges, and so can $u$ and $v$, contradicting Lemma~\ref{lem:single-walk}.\qed
\end{proof}

\begin{lemma}
  Let $\G$ be a \TC graph. If a convergent edge $e_2$ can reach a divergent edge $e_1$, then both $e_1$ and $e_2$ are pivot edges in $\G$. Furthermore, if $\G$ is minimal, this implies $e_1=e_2$.
\end{lemma}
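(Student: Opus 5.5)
The plan is to combine transitivity of edge reachability (Lemma~\ref{lem:basic-reachability}~(i)--(iii)) with the defining properties of divergent and convergent edges, and then apply Lemma~\ref{lem:two-pivots} for the minimal case.

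\emph{Step 1: $e_1$ is a sink edge.} Fix a vertex $v$. By Lemma~\ref{lem:phases}~(i), $v$ reaches some divergent edge $d$. By Lemma~\ref{lem:phases}~(ii), $d \leadsto e_2$. By hypothesis, $e_2 \leadsto e_1$. Chaining these with Lemma~\ref{lem:basic-reachability}~(i) and~(ii) gives $v \leadsto e_1$. Hence $e_1$ is a sink edge. It is also a source edge because it is divergent, so $e_1$ is a pivot.

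\emph{Step 2: $e_2$ is a source edge.} Fix a vertex $w$. By Lemma~\ref{lem:phases}~(iii), some convergent edge $c$ reaches $w$. We then need $e_2 \leadsto c$. The direct route is $e_2 \leadsto e_1$, which holds by hypothesis, followed by $e_1 \leadsto c$, which holds by Lemma~\ref{lem:phases}~(ii) since $e_1$ is divergent. Transitivity gives $e_2 \leadsto c$, and Lemma~\ref{lem:basic-reachability}~(ii)--(iii) then gives $e_2 \leadsto w$. A simpler alternative is $e_2 \leadsto e_1 \leadsto w$, since $e_1$ is a source edge. Either way, $e_2$ is a source edge. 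It is a sink edge because it is convergent, hence div-sink, hence sink. So $e_2$ is a pivot.

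\emph{Step 3: the minimal case.} Suppose $\G$ is minimal and, for contradiction, $e_1 \neq e_2$. Then $\G$ has two distinct pivot edges, which contradicts Lemma~\ref{lem:two-pivots}. Hence $e_1 = e_2$. This is consistent with Lemma~\ref{lem:basic-reachability}~(vi), which rules nothing out here since only one direction of reachability is assumed.

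The main point to watch is the transitivity bookkeeping between edges and vertices. Here it is routine, because every chain used goes through edges and ends at a vertex, which is exactly the situation covered by Lemma~\ref{lem:basic-reachability}~(ii) and~(iii).
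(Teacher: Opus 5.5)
Your proof is correct and follows the paper's argument: chain $V \leadsto e_2 \leadsto e_1 \leadsto V$ using that convergent edges are sinks and divergent edges are sources, then invoke Lemma~\ref{lem:two-pivots} for the minimal case. The only difference is that in Step~1 you detour through Lemma~\ref{lem:phases}~(i)--(ii). The sink property of $e_2$, which you use yourself in Step~2, already gives $v \leadsto e_2 \leadsto e_1$ directly, so the nontrivial part~(ii) of that characterization is not needed.
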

\begin{proof}
  Convergent edges can be reached by all the vertices and divergent edges can reach all vertices. Thus, if $e_2 \leadsto e_1$, then both $e_1$ and $e_2$ are pivot edges. If $e_1\ne e_2$, then $\G$ has two pivot edges, which implies by~\Cref{lem:two-pivots} that $\G$ is not minimal.\qed
\end{proof}


\bibliographystyle{plain}
\bibliography{lemmas}

\appendix
\newpage

\section{Time compression algorithm}
\label{sec:compression}

Given a graph $\G$, the procedure given by Algorithm~\ref{algo:compression} computes a compressed graph $\G'$ whose temporal walks are in bijection with the ones of $\G$. This algorithm is given for happy graphs, but it could easily be extended for more general temporal graphs.

\begin{algorithm}[ht]
  \begin{algorithmic}[1]
  \caption{\label{algo:compression}Time compression}
  \STATE {{\bf Input:} A graph $\G=(V,E,\lambda)$}
  \STATE {{\bf Output:} A time-compressed version of $\G$}
  \STATE {$E' \gets \emptyset$}
  \STATE {$t \gets 1$}
  \WHILE {$E \ne \emptyset$}
  \STATE {$M \gets \{uv \in E \mid uv = e^-(u) = e^-(v)$\}}
  \STATE {$\lambda'(e) \gets t$ for all $e\in M$}
  \STATE {$E \gets E \setminus M$}
  \STATE {$t \gets t + 1$}
  \ENDWHILE
  \STATE {{\bf return} $(V,E',\lambda')$}
\end{algorithmic}
\end{algorithm}

Informally, the algorithm finds all the edges such that no adjacent edges (on either endpoint) are smaller, assigns them label $1$, then removes these edges from the input and repeats with label $2$, and so on. It does so until all the edges have been processed.

\begin{lemma}
  The algorithm preserves happiness (i), preserves the set of temporal walks (ii), produces a compressed graph (iii), and runs in polynomial time (iv).
\end{lemma}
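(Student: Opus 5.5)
We prove the four parts in the order (iv), (i), (ii), (iii), tracking the loop invariant that at the start of iteration $t$ the remaining set $E$ consists exactly of the edges not yet labeled. We read line~7 as also adding $M$ to $E'$ (the pseudocode omits this).

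\emph{Termination and running time (iv).} If $E\neq\emptyset$, the edge $e$ of smallest original label $\lambda(e)$ among the remaining edges is, by properness, strictly smaller than every other remaining edge at both of its endpoints. Hence $e=e^-(u)=e^-(v)$ in the remaining graph, so $M\neq\emptyset$. Each iteration therefore removes at least one edge, giving at most $|E|$ iterations. Each iteration computes $M$ in $O(|E|)$ time by scanning the minimum remaining edge at every vertex, so the total is $O(|E|^2)$.

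\emph{Properness (i) and preservation of walks (ii).} Simplicity is immediate, since each edge receives exactly one label $\lambda'$. For properness, the edges of a single $M$ form a matching: if two edges of $M$ shared a vertex $u$, both would be $e^-(u)$ in the remaining graph. So two adjacent edges never receive the same $t$. The key claim for (ii) is that the local order at every vertex is preserved: for adjacent $e,f$ sharing $u$, $\lambda(e)<\lambda(f)$ iff $\lambda'(e)<\lambda'(f)$. Suppose $\lambda(e)<\lambda(f)$. While $e$ remains, $f$ is never $e^-(u)$, so $f$ cannot be placed in any $M$ before $e$ is removed. Hence $\lambda'(f)>\lambda'(e)$. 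Properness of $\lambda$ rules out $\lambda(e)=\lambda(f)$, and the converse follows by symmetry. Since in happy graphs a temporal walk is exactly a walk whose consecutive edges, which are adjacent, have increasing labels, the two graphs have literally the same set of temporal walks.

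\emph{Compression (iii).} Let $e=uv$ with $\lambda'(e)=t>1$. At iteration $t-1$ the edge $e$ was present but not in $M$, so in the remaining graph, say, $e\neq e^-(u)$. Then some remaining edge $f$ at $u$ satisfies $\lambda(f)<\lambda(e)$, and we take $f$ minimal. Among the edges remaining at iteration $t-1$, we want one adjacent to $e$ that receives label exactly $t-1$.

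I expect this step to be the main obstacle, because $f$ need not itself be removed at step $t-1$. The fix is to argue by contradiction. If no edge adjacent to $e$ received label $t-1$, then the set of remaining edges at $u$ and $v$ is identical at iterations $t-1$ and $t$, except that nothing adjacent to $e$ was removed. But $e\in M$ at iteration $t$ means $e=e^-(u)=e^-(v)$ among the edges remaining at step $t$. Since the adjacent edges are the same at step $t-1$, the same equalities hold at step $t-1$, so $e\in M$ already at iteration $t-1$, a contradiction. Therefore some edge adjacent to $e$ has label $t-1$, which is exactly the contiguity property.
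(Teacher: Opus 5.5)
Your proof is correct and follows essentially the same route as the paper. It uses the matching argument for (i), preservation of the local order at each vertex for (ii), and the same contradiction for (iii): if no adjacent edge was removed in round $t-1$, then $e$ was already locally minimal in that round. You are also a bit more careful than the paper, since you justify that each $M$ is nonempty (the smallest remaining edge always qualifies) to bound the number of rounds, and you note that the pseudocode never actually adds edges to $E'$.
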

\begin{proof}
  (i) The edges processed in any round form a matching and get a label that is specific to this round (properness). No edge is processed twice (simpleness). \\
  (ii) For any two adjacent edges, the one with smallest time label will be processed at an earlier round than the other. As a result, the procedure preserves local orders among adjacent edges, and preserves the set of temporal walks (up to time distortion).\\
  (iii) By contradiction, let $e$ be an edge such that $\lambda(e) > 1$ and there is no adjacent edge $e'$ such that $\lambda(e')=\lambda(e) -1$. The fact that it does not have an adjacent edge labeled $t-1$ implies that none of its adjacent edges was selected in round $t-1$. As such, $e$ must have been a local minima in round $t-1$ (or before) and have received a label smaller than $t$.\\
  (iv) Every round involves only finding edges that are locally minimum (polynomial time) and the number of round is at most the number of edges (polynomial time).
\end{proof}

\end{document}